\renewcommand\footnotetextcopyrightpermission[1]{} 
\newcommand{\gs}{\mathit{G}}
\newcommand{\sgs}{\mathit{SG}}
\newcommand{\ts}{\mathit{T}}
\newcommand{\ind}[1]{\mathds{1}[#1]}
\newcommand{\fmin}{F_{\scriptscriptstyle\!\mathrm{min}}}
\newcommand{\supp}[1]{\Omega\left(#1\right)}
\newtheorem{proposition}{Proposition}
\newtheorem{lemma}{Lemma}
\definecolor{lgreen}{rgb}{0.84, 1, 0.88}
\begin{document}
%
\title{Password Guessers Under a Microscope: An In-Depth Analysis to Inform Deployments}

\author{Zach Parish}
\affiliation{
 \institution{Ontario Tech University}
 \country{Ontario, Canada}
 }
\email{zachary.parish@ontariotechu.net}

\author{Connor Cushing}
\affiliation{
 \institution{Ontario Tech University}
 \country{Ontario, Canada}
 }
\email{connor.cushing@ontariotechu.net}

\author{Shourya Aggarwa}
\affiliation{
 \institution{Indian Institute of Technology Delhi}
 \country{New Delhi, Delhi, India}
 }
\email{shourya.aggarwal.cs117@cse.iitd.ac.in}

\author{Amirali Salehi-Abari}
\affiliation{
 \institution{Ontario Tech University}
 \country{Ontario, Canada}
}
\email{abari@ontariotechu.ca}

\author{Julie Thorpe}
\affiliation{
 \institution{Ontario Tech University}
 \country{Ontario, Canada}
}
\email{julie.thrope@ontariotechu.ca}

\begin{abstract}
Password guessers are instrumental for assessing the strength of passwords. Despite their diversity and abundance, comparisons between password guessers are limited to simple success rates. Thus, little is known on how password guessers can best be combined with or complement each other.  
 To extend analyses beyond success rates, we devise an analytical framework 
to compare the types of passwords that guessers generate. Using our framework, we show that different guessers often produce dissimilar passwords, even when trained on the same data. We leverage this result to show that combinations of computationally-cheap guessers 
are as effective in guessing passwords as computationally-intensive guessers, but more efficient. 
Our framework can be used to identify combinations of guessers that will best complement each other. 
To improve the success rate of any guesser, we also show how an effective training dataset can be identified for a given target password dataset, even when the target dataset is hashed. Our insights allow us to provide a concrete set of practical recommendations for password checking to effectively and efficiently measure password strength.

\end{abstract}



 \keywords{Password Checking, Password Guessers, Passwords, User Authentication}

\maketitle

\pagestyle{plain}

\section{Introduction}

Passwords are presently the most common form of user authentication, providing the first layer of defense in most systems. User authentication aims to confirm a user's claimed identity, typically by something the user knows (e.g., a password), something they have (e.g., a mobile device), or something they are (e.g., a biometric).  
Despite decades of research into 
more secure authentication methods, passwords remain dominant mainly due to their ease to implement and familiarity to most users \cite{Bonneau_quest_2012}.  
Password systems, in spite of their popularity, suffer from many security issues as passwords are mistakenly given to attackers \cite{Thomas2017},  reused across accounts \cite{Das_tangled_2014}, and cracked by guessing attacks \cite{Weir_password_2009, Komanduri_passwords_2011, Veras_semantic_2014, Bonneau_science_2012, Durmuth_omen:_2015, Melicher_fast_2016, Wang2016, Pal2019}. 

Password guessing attacks are a threat to both accounts (particularly after a data breach of hashed passwords) and hard-disk encryption (where passwords are used as a key).
To protect against password guessing attacks, administrators are advised to perform password checking, either \textit{proactively} at the time of password creation or \textit{reactively} through attempting to crack their own password databases \cite{Bishop1995}. 
While there are many password guessers available, the administrator's choice of them is critical for effective password checking. However, there is uncertainty as to which guessers to use, and how to train them, for best results. 

To make an informed decision, an administrator must understand how password guessers compare to and complement each other under different conditions. 
Unfortunately, the literature lacks methods to support such decisions and analyses of guesser combinations and training. 
Our work aims to fill this gap, by creating and applying a framework to put a set of password guessers ``under a microscope", in order to support such decisions.

Our contributions are as follows:
(1) We create an analytical framework to reveal insights into password guessers' behavior and their ability to complement and substitute each other. Our framework is an asset in identifying sets of complimentary guessers (as shown in our experiments).
(2) We apply our framework to perform a comprehensive comparison between a set of six popular password guessers, across a variety of training conditions. This comparison is arguably the most comprehensive to date, as it compares many aspects of the password guessers, including how they complement each other, how well they generalize, how sensitive they are to training data size, and their success rate over six different training and testing datasets.
(3) We show how practitioners can get more bang for their buck by using combinations of computationally-cheap guessers that, when used together, have comparable success rates to computationally-intensive guessers, but are more efficient (i.e., run faster). 
(4) We perform a comprehensive analysis of six publicly leaked password datasets, to support our investigation on how guesser performance is impacted by different aspects of training data.
(5) We describe how a useful similarity metric can be applied to identify a similar (which our results support is best) training dataset for password guessers, even when the target dataset only contains  hashed passwords.

Our work has two primary outcomes: (i) Our results allow us to provide a set of recommendations for practitioners performing password checking. (ii) Our analytical framework supports more comprehensive comparisons between password guessers. We discuss use cases regarding how researchers and practitioners can use our framework to understand how additional password guessers can compliment or substitute others.

\section{Related Work}
Unfortunately, it has been repeatedly shown that user passwords are often similar or identical, and are consequently guessable by an adversary \cite{Malone2011, Castelluccia_adaptive_2012, Mazurek_measuring_2013, Bonneau_science_2012}. In this section, we review some security concerns with passwords, their counter-measures, and finally how our work fits into the literature. 

\vskip 3mm
\noindent \textit{Patterns in Passwords.}
Many users adopt common strategies for creating their passwords to help them remember their passwords. However, these strategies leave behind specific patterns, which often make passwords more guessable. These patterns include keyboard patterns \cite{Schweitzer_Visualizing_2006}, distribution of character classes (or password structures) \cite{Weir_testing_2010}, replacement of letters with resembling characters (e.g., e to 3) \cite{Jakobsson2013}, popular topics (e.g., love) \cite{Veras_semantic_2014} and dates \cite{Veras2012}. 


\vskip 2mm
\noindent \textit{Reuse of Passwords.}
Password reuse weakens password strength. When a password is reused across multiple accounts, the breach of a password in one account could lead to a breach of other accounts. The average password is used for approximately $6$ different websites \cite{Florencio2007}, and $77\%$ of users either reuse or modify an existing password \cite{Das_tangled_2014}. These reused passwords have been exploited in targeted attacks (i.e., against a single target user), with success ranging from 16\% in 1000 guesses \cite{Pal2019} to 32-73\% in 100 guesses when personal information is also incorporated \cite{Wang2016} . 


\vskip 2mm
\noindent \textit{Password Composition Policies.} To prevent users from selecting weak passwords, many systems implement password composition policies---sets of rules that a new acceptable password must follow. Common examples of composition policies include a minimum password length and/or the inclusion of characters from multiple character classes (e.g., lowercase, uppercase, numbers, special characters). Despite their practical benefit in strengthening selected passwords \cite{Summers2004,Komanduri_passwords_2011}, overly strict password policies push users to insecure behaviors \cite{Campbell2011,Inglesant2010,Komanduri_passwords_2011} including writing down passwords \cite{Inglesant2010}, reusing passwords \cite{Komanduri_passwords_2011,Campbell2011}, or extending a weak password with a special character \cite{Komanduri_passwords_2011}.  Partly due to this usability shortfall, many social-media websites, which are often targets of attacks, choose to adopt less restrictive policies \cite{Florencio2010}.

\vskip 2mm
\noindent \textit{Password Meters.} Password meters, by estimating the strength of passwords during creation, encourage users to create stronger passwords 
\cite{Ur_password_2012}. However, most of the heuristic-based meters used in practice don't accurately reflect actual password strength \cite{DeCarnedeCarnavalet2014}. Recent developments focus on various approaches, such as advanced heuristics-based methods \cite{wheeler2016zxcvbn}, probabilistic methods (e.g., Markov model) \cite{Castelluccia_adaptive_2012}, and neural networks \cite{Melicher_fast_2016,Ur_design_2017,Pal2019}. Proposals based on neural networks, Markov models, and PCFGs have been found to outperform others \cite{Golla_2019_Meters}. Also, password meters can be personalized either by taking into account a user's personal information (e.g., user profile \cite{Ji_password_2017} or previously-leaked passwords \cite{Pal2019}) in measuring the password strength, or by providing personalized feedback for password strength improvement \cite{Ur_design_2017}.


\vskip 2mm
\noindent \textit{Password Guessing Tools.} 
There are many widely-studied guessing tools and techniques for guessing passwords.
\emph{Markov models} have been promising in password guessing \cite{Narayanan_fast_2005, Durmuth_omen:_2015}. \emph{Probabilistic context-free grammars (PCFGs)} \cite{Weir_password_2009} (and its extensions \cite{Houshmand_2015_Next,PCFG-code})  create grammar structure-based password guesses, and has been widely-used (see, for example \cite{Bonneau_science_2012, Kelley_guess_2012, Das_tangled_2014, Ur_password_2012, Mazurek_measuring_2013, Castelluccia_adaptive_2012}). The \emph{semantic guesser} \cite{Veras_semantic_2014} expanded PCFGs to exploit semantic patterns in passwords. Recently, \emph{neural network guessers} have drawn considerable attention \cite{Melicher_fast_2016,Hitaj_2019_PassGAN}. The use of multiple guessers has been proposed to measure password strength \cite{Ur_measuring_2015}.  While some guessers employ a combination (e.g., PCFGv4 uses OMEN), it is not clear how to confirm they are using the most complementary guessers, nor are there any studies or methods to support their identification. Many password guessers need to be carefully tuned on training datasets to effectively guess passwords of a target dataset. Some password guessers are sensitive to language differences in training data \cite{Ji-TDSC-2017}, and the similarity between training and target datasets improves guessing success \cite{Ji_password_2017}, a finding that we corroborated in just one of our many experiments but using a different method, more data, and more guessers (see Section \ref{subsec:datasetSim}).However, it was not clear how to identify similar data sets when the target is hashed; we describe a method to do so using our methods in Section \ref{sec:discussion:usecases}.

We note that practitioners need to make many decisions to implement effective password checking. These decisions include which subset of guessing tools to choose among many available options, and which training dataset to choose. To support these decisions, the literature falls short in systematically understanding guesser behaviors and their ability to complement or substitute one another. This work attempts to address this gap. 

\section{Analytical Framework} \label{sec:framework}

The analytical framework presented in this section can be applied to evaluate any set of guessers.  It can also be used to evaluate a set of training datasets to identify the best training datasets for password checking.  These two use cases are discussed further in Section \ref{sec:discussion}.

We consider a set of $m$ password guessers $\mathcal{G} = \{g_1, \dots,g_m\}$ where each $g_i$ represents a specific guesser (e.g., John the Ripper, OMEN, etc.). We aim to understand how each guesser $g_i \in \mathcal{G}$ behaves when trained on or tested against particular password datasets, what types of passwords they guess, and how similar one guesser's behavior is to others. To this end, each guesser $g \in \mathcal{G}$ will be trained on and tested against a set of $n$ password datasets $\mathcal{D} = \{D_1, ..., D_n\}$, where each $D_j$ is a publicly-available password dataset (e.g., RockYou, Twitter, etc.).\footnote{We use the terminology of ``testing against a dataset'' when a guesser is guessing the passwords of a target password dataset.} When a guesser $g_i \in \mathcal{G}$ is trained on a dataset $D_j \in \mathcal{D}$, it can create a password guess list $L_{ij}$. To compare various guessers trained on various datasets, we develop some statistics (see Section \ref{subsec:statistics}) for comparing guessers' guess lists. Our statistics deploy some pairwise-comparison metrics (see Section \ref{subsec:metrics}), which use either structural features (see Section \ref{subsec:features}) or the passwords shared between two lists.





\subsection{Password Features} \label{subsec:features}

For each password $w$, we extract two structural features: \emph{password length} $n_w$ (i.e., the number of its characters) and the \emph{number of character classes} $c_w$ that it contains. 
We focus on four distinct character classes: lowercase letters, uppercase letters, numbers, and symbols. For instance, $w=passw0rd!$ has $n_w = 9$ and $c_w=3$ with three character classes: lowercase letter, number, and symbol.  

To extract features from password list $L'$  (e.g., leaked password database or guess list of a guesser), we first aggregate the extracted features of all $w \in L'$ into a matrix $\mathbf{V} = [v_{xy}]$ where $v_{xy}$ is the fraction of passwords in password list $L'$ which contains $y$ characters covering $x$ character classes:
\begin{equation}
    v_{xy} = \frac{1}{|L'|}\sum_{w \in L'} \ind{c_w=x \And n_w=y},
\end{equation}
where $\ind{.}$ is the indicator function, and $|L'|$ represents the number of passwords in the list.\footnote{Indicator function $\ind{s}$ returns 1 if the statement $s$ is true; otherwise 0.} The matrix $\mathbf{V}$ has a natural probability interpretation: when one selects a password $w$ from the password list $L'$ uniformly at random, the password $w$ contains $y$ characters from $x$ character classes with a probability of $v_{xy}$. In other words, our matrix $\mathbf{V}$ captures the joint probability distribution of passwords over character classes and the number of characters. To ease our notations and analyses, we collapse (i.e., flatten) the matrix $\mathbf{V}$ into a feature vector $\mathbf{v}$. We refer to this feature vector as the \emph{structural features} of a password list. This simple representation allows us to preserve the impact of password policies of each password list.

\subsection{Pairwise Comparison Metrics} \label{subsec:metrics}


Our deployed pairwise comparison metrics are symmetric, so are computed once for each pair of password lists. While these metrics can use any features, we use either the structural features described in Section \ref{subsec:features} or the passwords shared between two lists. The use of cosine similarity combined with our proposed structural password features ease the interpretation of our analyses with regard to their connections to the password policies of password datasets. The use of Jaccard Index on the passwords shared between two lists allows us to analyze the degree to which password guessers have complementary behavior on the finest level of granularity. Our metrics have been widely used in information retrieval \cite{Singhal2001, Frakes:1992:IRD:129687, Baeza-Yates:1999:MIR:553876}, data mining \cite{Berkhin2002, Dunham:2002:DMI:560701}, and other password research \cite{Ji_password_2017}. Of particular interest is the \emph{generalized Jaccard index}, which as we show in Section \ref{sec:discussion} and the Appendix, can be used to not only measure similarity between each guesser's output, but also measure similarity between a plaintext dataset and a hashed \& salted password dataset.  

\subsubsection{Cosine Similarity}\label{cosineSimilarity}
Cosine similarity measures the angle between two non-zero vectors. For comparison of two password lists, one can extract structural features from each list, and then compute the cosine similarity on the corresponding feature vectors. The cosine similarity between two password lists $A$ and $B$ is given by
\begin{equation}
    C(A, B) = \frac{\mathbf{v}_{\scriptscriptstyle\!\mathtt{A}} \cdot \mathbf{v}_{\scriptscriptstyle\!\mathtt{B}}}{\|\mathbf{v}_{\scriptscriptstyle\!\mathtt{A}}\| \|\mathbf{v}_{\scriptscriptstyle\!\mathtt{B}}\|}, \label{eq:cosine_sim}    
\end{equation}
where $\mathbf{v}_{\scriptscriptstyle\!\mathtt{A}}$ and $\mathbf{v}_{\scriptscriptstyle\!\mathtt{B}}$ are structural feature vectors of $A$ and $B$, respectively. $\|.\|$ is the Euclidean norm, and $\mathbf{v}_{\scriptscriptstyle\!\mathtt{A}} \cdot \mathbf{v}_{\scriptscriptstyle\!\mathtt{B}}$ is the dot product of those two vectors.  The closer the cosine similarity value is to 1, the smaller the angle between the two vectors is, and the more similar they are. In other words, two lists of passwords with similar feature distributions have a high cosine similarity. We use cosine similarity combined with our proposed structural features for two purposes: (i) comparing the structure of leaked password databases with each other; (ii) comparing the structure of two guess lists. 

\subsubsection{Jaccard Index}
Jaccard index measures the extent two sets overlap with each other, where the intersection of two sets is compared to their union. The Jaccard index between two password lists $A$ and $B$ can be computed by
\begin{equation}
    J(A, B)=\frac{|A \cap B|}{|A \cup B|}.
    \label{eq:jaccard_sim}
\end{equation}
The closer the Jaccard index is to 1, the closer in size the intersection of the sets is to their union, and consequently the more similar two sets are. In other words, two sets of passwords with high overlap will have a high Jaccard index. The Jaccard index also has a natural probabilistic interpretation: if one chooses a password uniformly at random from either password list, the Jaccard index captures the likelihood of selecting a password belonging to both sets. 

When password lists have duplicates (e.g., leaked password datasets), we view the password list as a multiset, a modification of sets that allows for duplicated elements. In these cases,
we apply a generalized version of the Jaccard index \cite{Ji_password_2017} to preserve the frequency information of password duplicates in password lists. 
Letting $o(w, A)$ be the number of occurrences of password $w$ in password list $A$, the \emph{generalized Jaccard index} between two password lists $A$ and $B$ is given by
\begin{equation}
    J(A, B) = \frac{\sum_{w \in U} min\left(o(w,A), o(w, B)\right)}
{\sum_{w \in U} max\left(o(w,A), o(w, B)\right)},
\label{eq:gen_jaccard_sim}
\end{equation}
where $U=(\supp{A}\cup \supp{B}$), and $\supp{A}$ represents the set of unique passwords in the password list $A$.  

\subsection{Statistics} \label{subsec:statistics}
Our comparison metrics can be readily used for the comparison of a pair of password lists. However, to compare two guessers thoroughly, it is useful to summarize the comparison metrics of two guessers under different training and testing datasets. This section explains our proposed statistics for summarizing comparison metrics. Our statistics fall into two categories: relating to guessing behaviors and relating to guessing success.



\subsubsection{Statistics for Guessing Behaviors}\label{subsec:guessingStats}
This class of statistics is devised to either compare the guessing behaviors of password guessers with each other, or measure how different training datasets affect the guessing behavior of a given guesser.

Our \emph{guessing similarity} statistic summarizes the similarity of two guessers' guess lists when trained on the same dataset by averaging the comparison metric (e.g, Jaccard or Cosine) of their guess lists over various training datasets. We calculate the \emph{guessing similarity} of two guessers $g_i$ and $g_j$ by
\begin{equation} 
    \gs(g_i, g_j, M) = \frac{1}{n} \sum_{k=1}^n M(L_{ik}, L_{jk})
    \label{eq:guessingSimilarity}
\end{equation}
where $M \in \{C,J\}$ is either Cosine similarity (see Eq.~\ref{eq:cosine_sim}) or Jaccard index (see Eq.~\ref{eq:jaccard_sim}), and $L_{ik}$ is the list of password guesses (without any duplicates) generated by $g_i$ trained on datasets $D_k$. Here, $n$ is the number of datasets in $\mathcal{D}$. We also introduce \textit{successful guessing similarity} to measure how two guessers' successful guesses are similar:
\begin{equation} \label{eq:suc-guessing-similarity}
    \sgs(g_i, g_j, M) = \frac{1}{n(n-1)} \sum_{k=1}^n\sum_{\ell\neq k}^n M\left(L_{ik}\cap D_\ell, L_{jk}\cap D_\ell\right).
\end{equation}

One might be interested in measuring how similarly two different password datasets can train guessers. To this end, we introduce our \emph{training similarity} statistic which calculates the extent two different training password datasets result in generating similar guess lists of passwords when used for training. We define \emph{training similarity} between two datasets $D_j$ and $D_k$  by
\begin{equation}\label{eq:trainingSimilarity}
    \ts (D_j, D_k, M) = \frac{1}{m} \sum_{i=1}^m M(L_{ij}, L_{ik}),
\end{equation}
where $m$ is the number of different guessers in $\mathcal{G}$. This formula computes how similarly $D_j$ and $D_k$ can train guessers on average. By capturing the extent two various datasets are effectively similar in training guessers, one can identify training datasets which are as effective as another dataset in training guessers. This could be used to identify effective, yet small datasets, which could drastically speed up the training process.



\subsubsection{Statistics for Guessing Success}
The \emph{guessing success} statistics quantify the guessing accuracy of guessers under various settings (e.g., training and testing datasets), and also determine how training data affects guessing success for various guessers.

When each guesser $g_i$ is trained on password dataset $D_j$ and tested against password dataset $D_k$, one can compute its \emph{success rate}, as the portion of successfully guessed passwords, by

\begin{equation}
   s_{ijk} = \frac{|L_{ij} \cap D_k|}{|D_k|}. 
\end{equation}
Note that $s_{ijk} \in [0,1]$, where $s_{ijk}=1$ implies that all passwords in $D_k$ are guessed successfully by $g_i$ trained on $D_j$. To summarize the success rate for a specific guesser $g_i$, one can compute its mean success rate over all distinct training and testing datasets by 
\begin{equation}\label{eq:guesserSuccessRate}
    \left\langle s_{i::} \right\rangle = \frac{1}{n(n-1)} \sum_{j=1}^n \sum_{k\neq j}^n s_{ijk}.
\end{equation}
We similarly compute the success rate of training dataset $D_j$ by 
\begin{equation}\label{eq:trainingSuccessRate}
    \left\langle s_{:j:} \right\rangle = \frac{1}{m(n-1)} \sum_{i=1}^m \sum_{k\neq j}^n s_{ijk},
\end{equation}
and the average success rate of a fixed dataset $D_j$ and guesser $g_i$ by
\begin{equation}\label{eq:FixedDandGSuccess}
    \left\langle s_{ij:} \right\rangle = \frac{1}{n-1} \sum_{k\neq j}^n s_{ijk}.
\end{equation}

\section{Experiments}
Our experiments aim to understand the impact of training dataset choice on guessers, the performance of guessers, 
and how guessers can complement or substitute one another.

\subsection{Experimental Setup} \label{subsec:experimental-setup}

We choose a variety of different password datasets and guessers.

\subsubsection{Password Datasets}\label{datasets}
Our experiments use a variety of publicly available leaked password datasets, which have been the subject of other password research studies (for example, \cite{Wei2018, Zhou2017, Wang2016, Weir_testing_2010, Das_tangled_2014, furnell2011assessing}). We have curated and cleaned these datasets by converting their passwords to Unicode. Table \ref{table:passSetSize} shows the number of total and unique passwords in each dataset as well as the ratio between those values.\footnote{We exclusively use publicly available datasets and don't report any specific password information. Thus, there is no risk of exposing private user information. 
We keep  only the passwords with no links to their original owner.}

\begin{table}[tb]
    \centering
    \caption{The password datasets, their sizes, and the ratio between unique and total number of passwords. *Merged contains all other plaintext datasets in this table. }
    \label{table:passSetSize}
    \begin{tabular}{lllll}
    \toprule
    \multicolumn{1}{c}{} & \multicolumn{2}{c}{\textbf{Number of Passwords}} \\ 
        \cmidrule[0.7pt](lr){2-3}
        \textbf{Datasets} & \textbf{Total} & \textbf{Unique} & \textbf{Ratio} & \textbf{Type}\\
        \midrule
        ClixSense \cite{goodin_million_2016} & 2,222,359 & 1,628,205 & 0.7326 & Plaintext\\
        Webhost \cite{foxbrewster_million_2017} & 15,292,021 & 10,589,775 & 0.6925 & Plaintext\\
        Mate1 \cite{russon_mate1_2016} & 27,403,932 & 11,988,154 & 0.4375 & Plaintext\\
        RockYou \cite{cubrilovic_rockyou_2009} & 32,596,319 & 14,337,716 & 0.4399 & Plaintext\\
        Fling \cite{das_million_2016} & 40,769,652 & 16,810,091 & 0.4123 & Plaintext\\
        Twitter \cite{TwitterDataset} & 40,872,901 & 22,579,065 & 0.5524 & Plaintext\\
        Merged* & 159,157,184 & 67,628,637  & 0.4249 & Plaintext\\
        LinkedIn \cite{hackett_2016} & 174,243,105 & 61,829,207 & 0.3548 & Hashed\\
        \bottomrule
    \end{tabular}
\end{table}

\subsubsection{Password Guessers} \label{passwordGuessingTools}

To include a wide variety of guesser behaviors, we focus on six guessers from three different classes of password guessers: Markov models, Probabilistic Context Free Grammars (PCFGs), and Neural Networks. All examined guessers are used with their recommended optimal/default settings, or tuned to perform their best on our datasets.


\vskip 1.5mm
\noindent \textit{John the Ripper (JtR-Markov).} We use its community build (1.9.0-bleeding-jumbo) \cite{jtr-bleeding-jumbo} in Markov mode. 
We restrict the maximum length of passwords to 12 characters, which provided the best results and is consistent with other studies \cite{Veras_semantic_2014}. 
JtR runs single-threaded during both training 
and guessing.

\vskip 1.5mm
\noindent \textit{Ordered Markov Enumerator (OMEN).} We use OMEN \cite{Durmuth_omen:_2015, OMEN-code} with the default settings. OMEN produces only ASCII passwords and runs single-threaded during training and guessing.

\vskip 1.5mm
\noindent \textit{Probabilistic Context-Free Grammar (PCFGv4).}
We used PCFG version 4.0 \cite{PCFG-code}, an extension of the original PCFG \cite{Weir_password_2009}. This version uses OMEN to generate a certain percentage of passwords and generate the remainder with PCFGs. We have disabled this feature to generate passwords exclusively from PCFGv4 
as the use of OMEN decreased the success rate in most of our tests. 
PCFGv4 runs single-threaded.


\vskip 1.5mm
\noindent \textit{Semantic Guesser (Sem).}
We use the lite 
version of Sem \cite{Sem-code,Veras_semantic_2014}. 
The grammars are trained as recommended using maximum likelihood estimation, the backoff algorithm is used for producing tags, and mangling rules are enabled for generating guesses. 
Sem uses multiprocessing 
during training, but runs single threaded for guessing.

\vskip 1.5mm
\noindent \textit{Neural Network (NN).}
We generate guesses using the NN's ``human'' mode \cite{Melicher_fast_2016, NN-code}, and sort them in descending probability order. We limit the length of passwords to 6--40 characters to maximize NN's success rate for our datasets. We use a model consisting of three LSTM layers (with 1024 neurons each) and two dense layers (with 512 neurons each). 
The neural network is our only guesser that uses GPU resources along with CPU. The neural network runs multi-threaded during training and guessing.



\vskip 1.5mm
\noindent \textit{Identity Guesser (ID).}
This guesser takes a training dataset as input, removes its duplicates, and outputs its unique passwords in descending order of their frequency in the training dataset. 
In other words, this guesser computes the empirical probability distribution of the passwords in the training dataset (i.e., training phase), then outputs the passwords from the highest to the lowest probability (i.e., generation phase).  This simple guesser is a valuable benchmark for understanding how well other guessers learn and generalize. 


\subsection{Impact of Training Data Choice}\label{subsec:analysis_data}
We investigate how guessing success rates are impacted by different aspects of training data.
We train all six password guessers on each of the six individual plaintext datasets and test them against every other plaintext dataset, yielding 180 password cracking scenarios. For all guessers, we set the cutoff to 300 million guesses. 

Figure \ref{fig:successTraining} captures the average success rates for various pairs of training and testing datasets. One can make two important observations: (i) 
some datasets (e.g., Twitter, Mate1) are more effective training data than others (e.g., Webhost); (ii) some pairs of datasets are effective for training and testing against each other, i.e., when one dataset can train guessers well against another dataset (e.g., RockYou-Mate1,  ClixSense-Mate1, etc.). These two observations motivate us towards a  deeper analysis of the characteristics of effective training datasets.

\begin{figure}[tb]
    \centering
    \includegraphics[width=0.8\columnwidth]{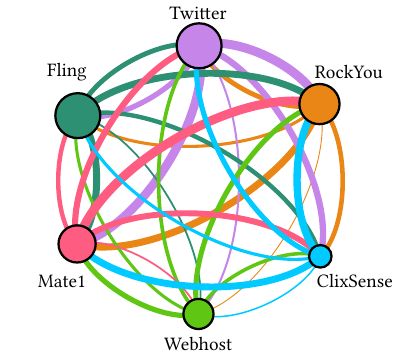}
    \caption{The average success rates over all guessers for training and testing dataset pairs. The edges are directed clockwise from training to testing dataset, with colors matching the training dataset color. The edge width is proportional to the average success rate of guessers for a fixed training and testing dataset pair.  The node size shows the dataset size.}
    \label{fig:successTraining}
    \vspace{-10pt}
\end{figure}

\begin{table}
    \begin{center}
    \caption{Mean success rates (and std.\ deviations) for training password datasets. Datasets are ordered smallest to largest.}
    \label{table:successTraining}
    \begin{tabular}[t]{lcc}
        \toprule
        \textbf{Training} & \textbf{Success Rate 1m} & \textbf{Success Rate 300m}\\
        \midrule
        ClixSense & 15.929 (12.634) & 33.737 (16.355) \\
        000webhost & 8.72 (5.387) & 29.602 (11.968) \\
        Mate1 & 18.337 (13.234) & 38.167 (14.799) \\
        RockYou & 13.845 (14.037) & 30.264 (17.592) \\
        Fling & 11.835 (9.158) & 35.155 (16.393) \\
        Twitter & 20.59 (15.303) & 42.815 (17.371) \\
        \bottomrule
        \end{tabular}
        \end{center}
\end{table} 
\subsubsection{Size of training dataset}
We ask whether the success rate of a guesser, on average, increases with the size of training dataset.  Table \ref{table:successTraining} shows the average success rates of each training dataset over all guessers and target datasets (computed by Eq.~\ref{eq:trainingSuccessRate}), with datasets ordered from smallest to largest size. While our largest dataset performs the best, our smallest dataset ClixSense outperforms both Webhost and RockYou, which are over six and fifteen times larger than it respectively. 
For a formal analysis, we calculated the statistical correlation between the number of passwords in the training dataset and the averaged success rate. The resulting Pearson coefficient of 0.189 (p= 0.315) suggests insignificant correlation between training dataset size and success rate. This result suggests that a larger dataset size isn't necessarily a requirement for an effective training data set.

\begin{figure*}[t]
\begin{center}
    \begin{subfigure}{0.47\textwidth}
    \begin{center}
    \includegraphics[width=0.8\textwidth]{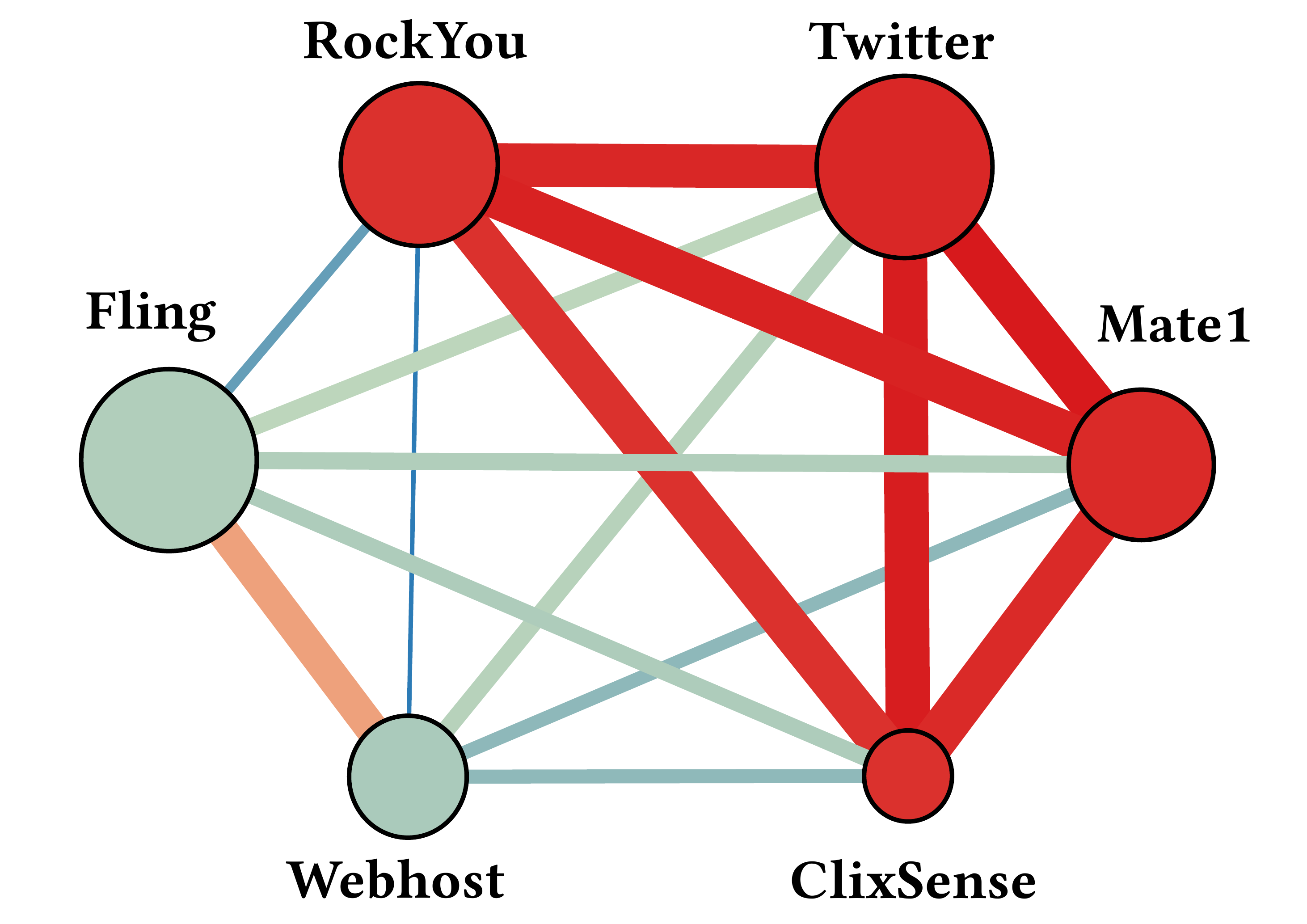}
    \includegraphics[width=0.11\textwidth]{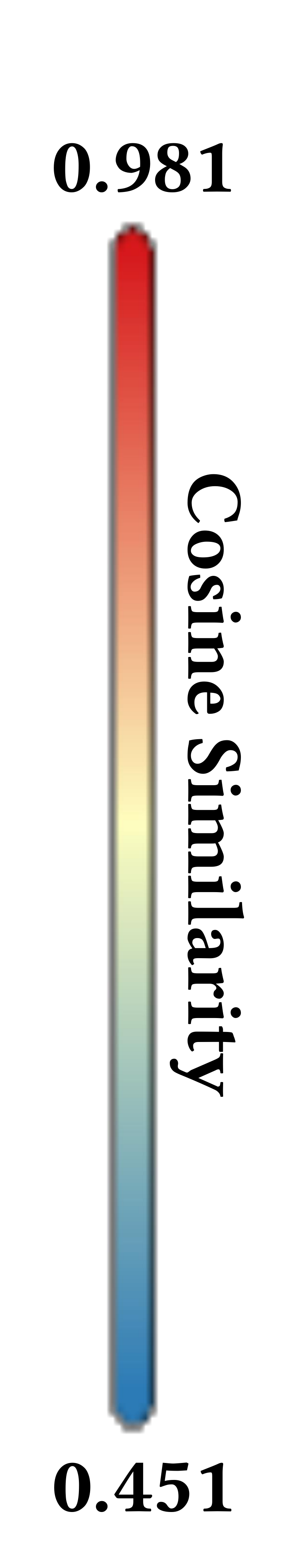}
    \caption{The cosine similarity.}
    \label{fig:passwordMetricsCosine}
    \end{center}
    \end{subfigure}
    \begin{subfigure}{0.47\textwidth}
    \begin{center}
    \includegraphics[width=0.8\textwidth]{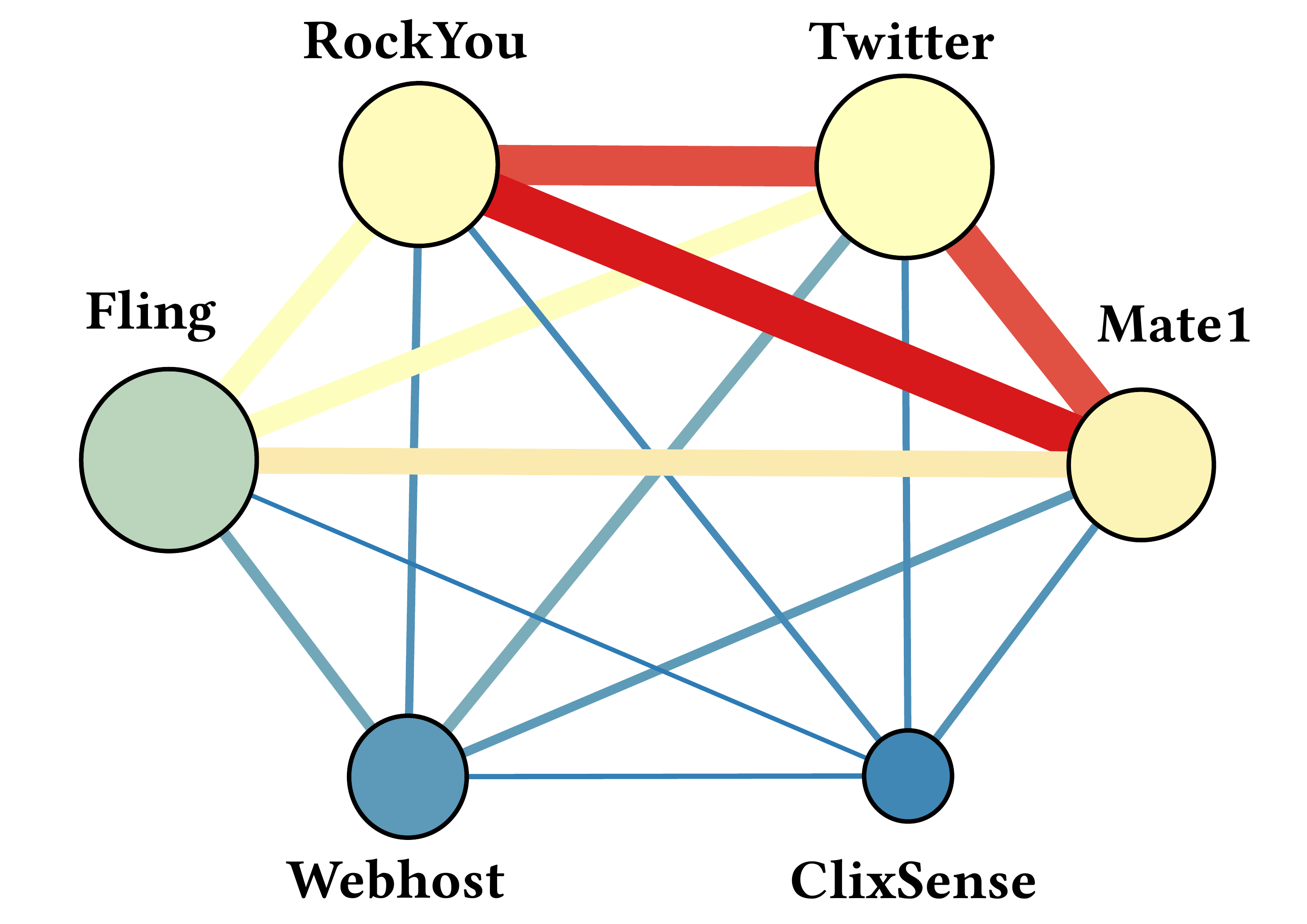}
    \includegraphics[width=0.11\textwidth]{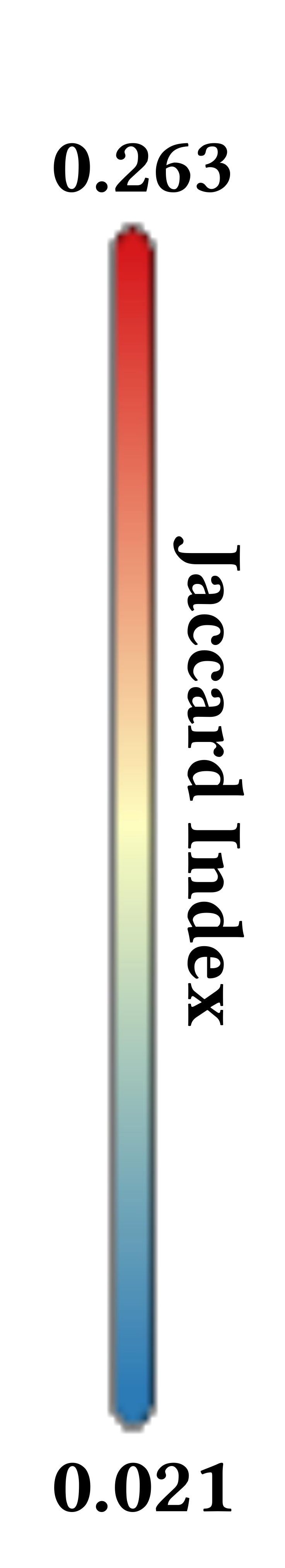}
    \caption{The generalized Jaccard index.}
    \label{fig:passwordMetricsJaccard}
    \end{center}
    \end{subfigure}\\
\begin{subfigure}{0.47\textwidth}
\begin{center}
\includegraphics[width=0.8\columnwidth]{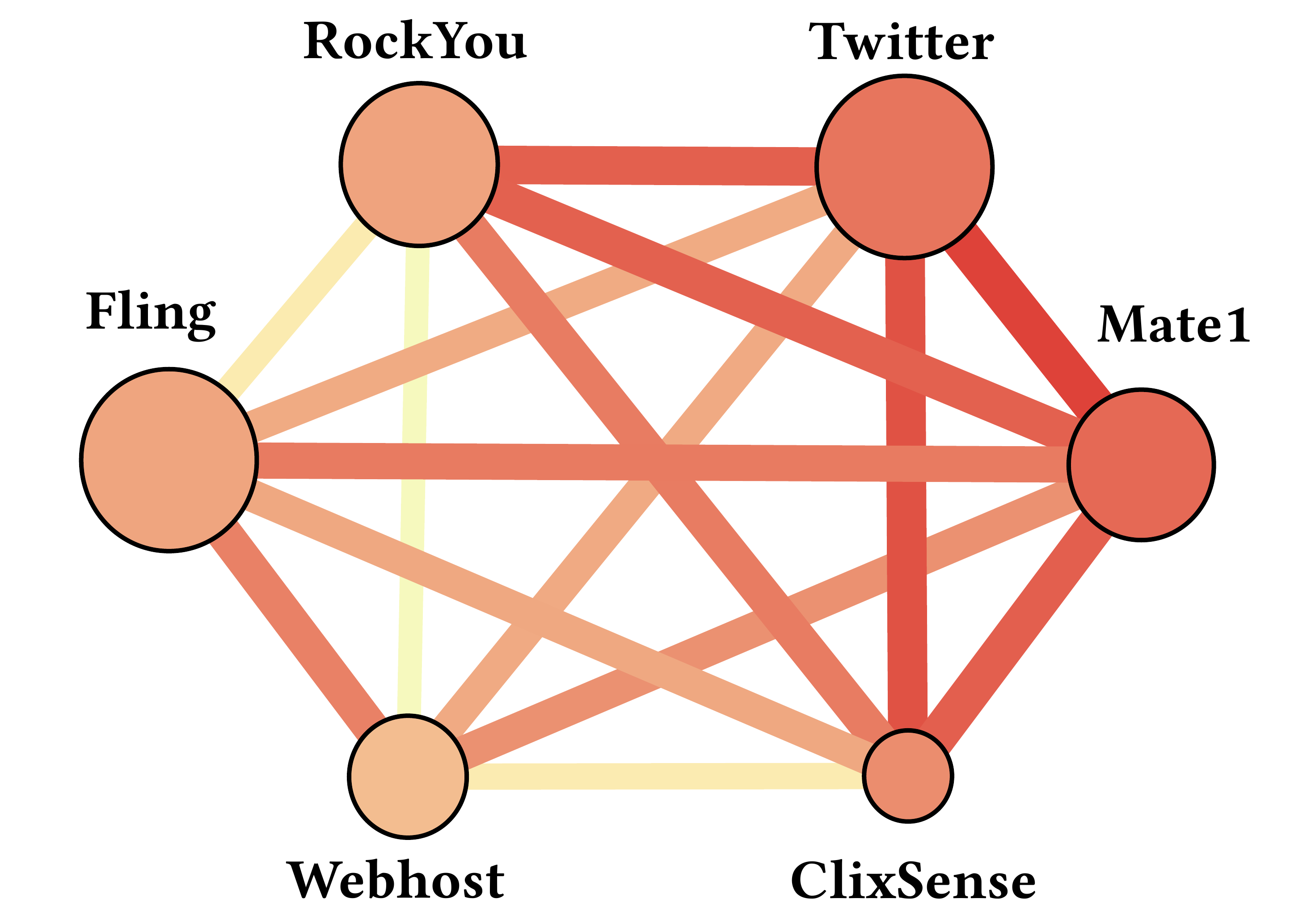}
\includegraphics[width=0.11\columnwidth]{images/ScaleBars/DatasetCosine.pdf}
\caption{The cosine training similarity.}
\label{fig:trainingSimilarityCosine}
\end{center}
\end{subfigure}
\begin{subfigure}{0.47\textwidth}
\begin{center}
\includegraphics[width=0.8\columnwidth]{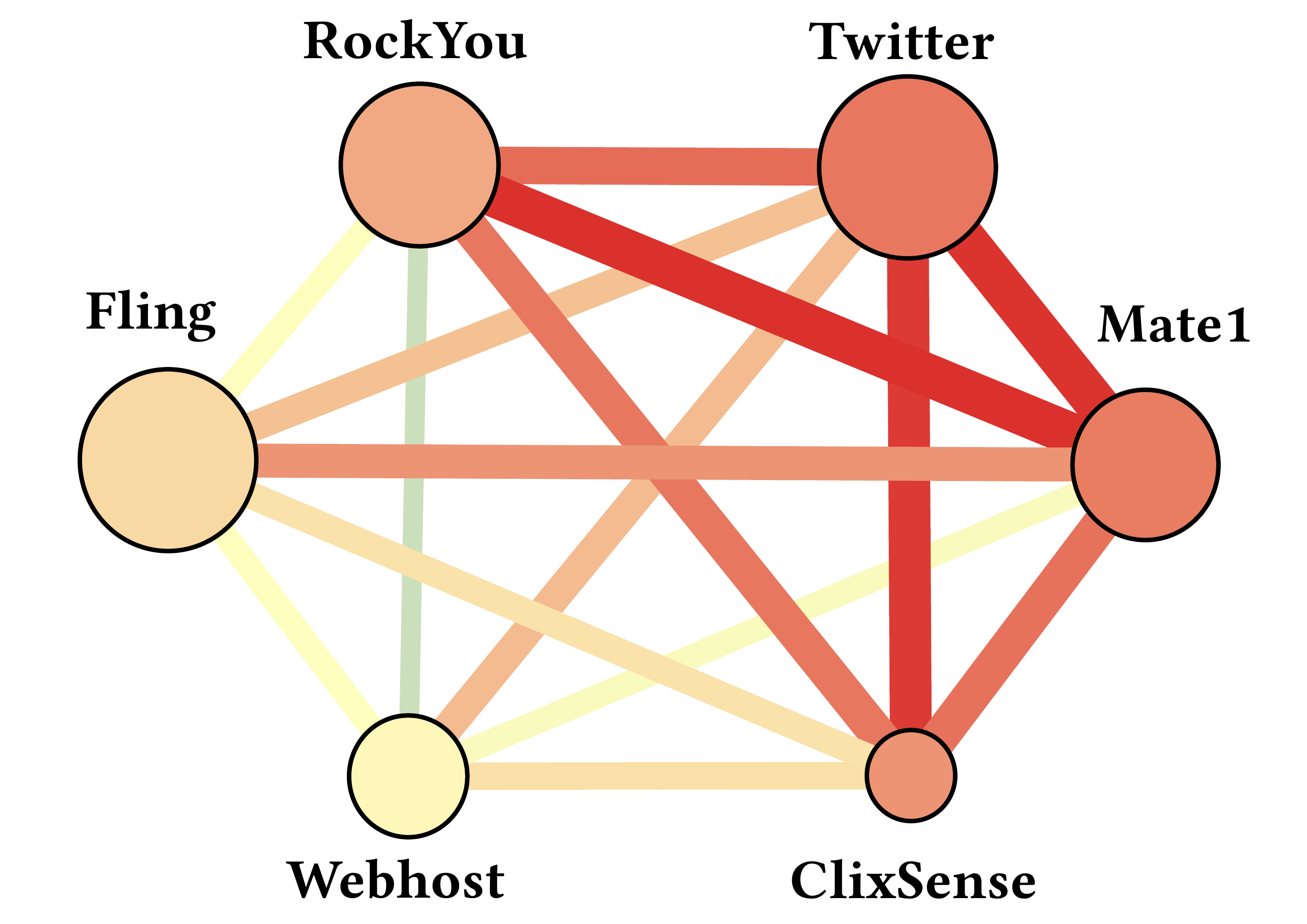}
\includegraphics[width=0.11\columnwidth]{images/ScaleBars/DatasetJaccard.pdf}
\caption{The Jaccard training similarity.}
\label{fig:trainingSimilarityJaccard}
\end{center}
\end{subfigure}
    \caption{Plaintext datasets with their pairwise (a) cosine similarity, (b) generalized Jaccard similarity, (c) cosine training similarity, and (d) Jaccard training similarity. The training similarity between datasets is computed by Eq.~\ref{eq:trainingSimilarity}. The edge weights and colors are based on the corresponding metric value between two datasets. The node color captures the metric average for the corresponding dataset. The node size is proportional to the dataset size.}
    \label{fig:passwordMetrics}
\end{center}
\end{figure*}
\vskip 2mm
\subsubsection{Similarity between training and target datasets}
\label{subsec:datasetSim}
We next focus on how the similarity between training and target datasets impacts the success rate of guessers. We first compute the cosine similarity and generalized Jaccard index (see Eq.~\ref{eq:cosine_sim} and Eq.~\ref{eq:gen_jaccard_sim}) between password datasets, and then explore the relationship of these similarities with success rates. 

Figure \ref{fig:passwordMetricsCosine} shows that that Mate1, Twitter, RockYou and ClixSense have high structural password similarity (i.e., cosine similarity). Fling and Webhost are dissimilar to other datasets, but similar to each other. 
Figure \ref{fig:passwordMetricsJaccard} suggests that the exact overlap between datasets (i.e., generalized Jaccard similarity) is often low with exceptions for larger datasets (i.e., Fling, Twitter, RockYou and Mate1), likely due to their sizes. 

The cross-examination of Figures \ref{fig:successTraining}, \ref{fig:passwordMetricsCosine}, and \ref{fig:passwordMetricsJaccard} suggest the datasets with higher similarity tend to have mutually higher success rates (e.g., Mate1 and RockYou share high similarity and mutual success rates). Thus, we hypothesize that the  similarity between training and testing datasets has a positive effect on success rate. To test this hypothesis, we ran Pearson statistical tests between the similarity metric of any ordered pair of datasets and their success rates
. Our cosine similarity and Jaccard metric have correlation coefficients of 0.597 ($p= 0.00049$) and 0.596 ($p = 0.00049$) respectively. Both are significant and large by Cohen's convention. This further confirms that dataset similarity, structural (cosine) or overlap (Jaccard), is a key factor in success rate. These results complement previous findings \cite{Ji_password_2017}  on the relationship between the similarity of training and testing datasets and guesser success rates. We note that this is our only experiment with partial overlap with other work \cite{Ji_password_2017} by computing cosine similarity and Jaccard index between datasets; however,  we use a different set of datasets and guessers, different features for cosine similarity, and a different application of Jaccard index (between datasets rather than between their features). 
We  also go on to show in Section \ref{sec:discussion:usecases} how Jaccard Index can be used to measure similarity even when the target dataset is hashed.

\subsubsection{Training similarity between datasets}
We next explore how similarly two datasets can train a guesser using our notion of training similarity (see  Eq.~\ref{eq:trainingSimilarity}). This exploration might not have a direct application in password checking, but offers interesting observations for password guessing. Our investigation is motivated by the surprising performance of ClixSense in Table \ref{table:successTraining}. Despite ClixSense's small size, its performance raises the question of how similarly ClixSense and a bigger dataset can train a guesser, as smaller training datasets may be desirable in some cases to reduce training time. We exclude the Identity guesser in this analysis due to its simplicity in learning; also, its results mirror dataset similarity (see \ref{subsec:datasetSim}).  

Figures \ref{fig:trainingSimilarityCosine} and \ref{fig:trainingSimilarityJaccard} demonstrate the cosine and Jaccard training similarity between our datasets. The cosine training similarity is relatively high between most pairs of datasets. 
The cluster of RockYou, Twitter, Mate1, and ClixSense share relatively high overlap of generated passwords (see their pairwise Jaccard training similarity). This means passwords generated from training with ClixSense, despite its small size, have high overlap with passwords generated from training with other datasets.

\subsection{Individual Guesser Performance}
To evaluate the performance of each guesser, we compute its average success rate and runtime across varied training data, target data, and password guessing scenarios (i.e., online and offline attacks). 

\subsubsection{Guessing Success Rate} To gauge the average performance of each guesser, we train and test every guesser on each possible pair of non-merged plaintext datasets. 
Then, each guesser generates guess lists at cutoffs of 1 million and 300 million guesses to simulate online \cite{FHO2016_PushingString} and limited offline attacks, respectively. Table \ref{table:guesserSuccess} shows the mean success rate of each guesser, computed by Eq.~\ref{eq:guesserSuccessRate}. At one million guesses, PCFGv4 and the Identity guesser outperform others, while JtR-Markov and OMEN perform the worst. Notably, only PCFGv4 is able to outperform the Identity guesser at this cutoff with a negligible margin. 

\begin{table}[tb]
        \begin{center}
        \caption{Guessers' mean success rates at 1 Million and 300 Million guesses (standard deviations in parenthesis). The two best and worst are highlighted with green and red, resp.}
        \label{table:guesserSuccess}
        \begin{tabular}{lcc}
        \toprule
        \textbf{Guesser}&\textbf{Success Rate@1M}&\textbf{Success Rate@300M}\\
        \midrule
        Identity &\cellcolor{lgreen}23.238 (11.859) &30.519 (14.079) \\
        JtR-Markov &\cellcolor{pink!40}0.665 (0.993) &\cellcolor{pink!40}27.591 (11.563) \\
        OMEN &\cellcolor{pink!40}5.921 (3.225) & \cellcolor{pink!40}22.121 (10.749) \\
        Sem &18.219 (10.344) &\cellcolor{lgreen}41.343 (13.274) \\
        PCFGv4 &\cellcolor{lgreen}23.551 (11.545) &\cellcolor{lgreen}47.397 (12.364) \\
        NN & 17.662 (11.585) & 40.768 (19.734) \\
        \bottomrule
        \end{tabular}
        \end{center}
\end{table}

For three-hundred million guesses, PCFGv4 performs the best, with a 6\% lead over the second best guesser Sem. The Identity guesser performs surprisingly well, with an average of 30.5\% (but a high standard deviation of 14.07\%) in at most 21,653,268 guesses (compared to 300 million guesses for other guessers).\footnote{The upperbound for number of guesses in the Identity guesser is derived from the maximum number of unique passwords in our datasets.} In its best case, the Identity guesser trained on Twitter guesses 56.7\% of RockYou, only 10.14\% lower than the best guesser PCFGv4 on that same pair. The Identity guesser's high success rate arises from a relatively large overlap between datasets, observed 
in Figure \ref{fig:passwordMetricsJaccard}. 
 OMEN under-performs JtR-Markov, performing worst overall at this cutoff.

\subsubsection{Average Runtime}
To help a system administrator understand the resource requirements of guessers, we next analyze their runtimes during training and guess list creation. Each guesser is trained and generates guesses on the same GPU-accelerated server which ran no other jobs. The server has 2 Intel(R) Xeon(R) Gold 6148 CPUs with 80 total cores @ 2.40GHz and 4 Nvidia GeForce 1080 Ti GPUs. 
We note that only the neural network benefits from multiple GPUs to parallelize computations.

\tabcolsep=0.11cm
\begin{table}[tb]
    \centering
    \caption{Guesser training and generation time. Training datasets are randomly sampled from the Merged Dataset. Guessers (except Identity) generated 300M guesses.}
    \label{table:guesserRuntime}
    \begin{tabular}{lccc}
    \toprule
    \multicolumn{1}{c}{} & \multicolumn{2}{c}{\textbf{Training}} \\ 
        \cmidrule(lr){2-3}
        \textbf{Guesser} & \textbf{1 Million} & \textbf{50 Million} & \textbf{Generation} \\
        \midrule
        JtR-Markov & 00h 00m 00.1s & 00d 00h 00m 02.2s &  00h 00m 33s\\
        Identity & 00h 00m 00.3s & 00d 00h 00m 24.9s & 00h 00m 18s \\
        OMEN & 00h 00m 03.0s & 00d 00h 00m 23.0s & 00h 07m 10s\\
        Sem & 00h 01m 38.3s & 00d 00h 20m 14.6s & 00h 55m 30s\\
        PCFGv4 & 00h 03m 49.5s & 00d 01h 03m 38.4s & 00h 30m 58s \\
        NN & 01h 18m 08.0s &  02d 17h 01m 49.0s &  19h 44m 20s \\
        \bottomrule
    \end{tabular}
\end{table}

%
Table \ref{table:guesserRuntime} reports guesser training and generation time. For training, we created two datasets by sampling 1 million and 50 million passwords from the Merged dataset.\footnote{Our code for training the identity guesser (i.e., computing empirical distribution of unique passwords) and its guess generation (i.e., sorting passwords based on their probabilities) is written in Python without any optimization.} For each guesser, the training time increases with the training dataset size. The Markov-based and Identity guessers perform the fastest ($<$ 25 sec. for 50 million), with PCFGs taking longer (about one hour for 50 million) and the neural network taking the longest (more than 2.5 days for 50 million). For password generation, we observe that the Identity guesser and Markov models are again by far the fastest. Note that the Identity guesser only produced approximately 67M guesses, almost 4.5 times fewer guesses than produced by others. The NN is considerably slower than others: 2100 times slower than JTR-Markov, and even 39 times slower than PCFGv4.

%

\subsection{Guesser Behaviors}

We investigate the behavior of each guesser (i.e., their generated guess lists) under various training and target datasets. We also explore how each guesser complements and substitutes others.

\subsubsection{Generalizability}
One important characteristic of guessers is how well they can generalize, i.e., predict and generate previously unseen passwords. To measure this, we train each guesser on the Webhost dataset as it is the least similar to the other datasets, both in terms of structure (see Figure \ref{fig:passwordMetricsCosine}) and actual password overlap (see Figure \ref{fig:passwordMetricsJaccard}). We then test the Webhost trained guessers against every other dataset and calculate each guesser's mean success rate. Table \ref{tab:webhostSuccess} shows the mean success rate of each guesser: PCFGv4 and NN outperform others, demonstrating a relatively high degree of generalizability compared to others. The Identity guesser and OMEN perform notably worse. This is expected for the Identity guesser with its inability to generalize, but surprising for OMEN. There is a notable amount of variance in the success rates of guessers with similar approaches: 15\% difference between Markov models JtR and OMEN, and 10\% difference between PCFG-based guessers PCFGv4 and Sem. This highlights how even guessers with similar underlying approaches can display differing generalization behavior. 

\begin{table}[tb]
        \begin{center}
        \caption{Guessers' generalizability, with 300M guess cutoff. A higher success rate indicates a better ability to generalize.}
        \label{tab:webhostSuccess}
        \begin{tabular}[t]{lccccc}
        \toprule
        \textbf{Identity} & \textbf{OMEN} & \textbf{JtR-Markov} &  \textbf{Sem} & \textbf{NN} & \textbf{PCFGv4} \\
        \midrule
        15.378 & 15.664 & 30.265 & 33.099 & 39.585 & 43.618 \\
        \bottomrule
        \end{tabular}
         \end{center}
\end{table}


\begin{figure*}[h]
\begin{subfigure}{\columnwidth}
\includegraphics[width=0.77\columnwidth]{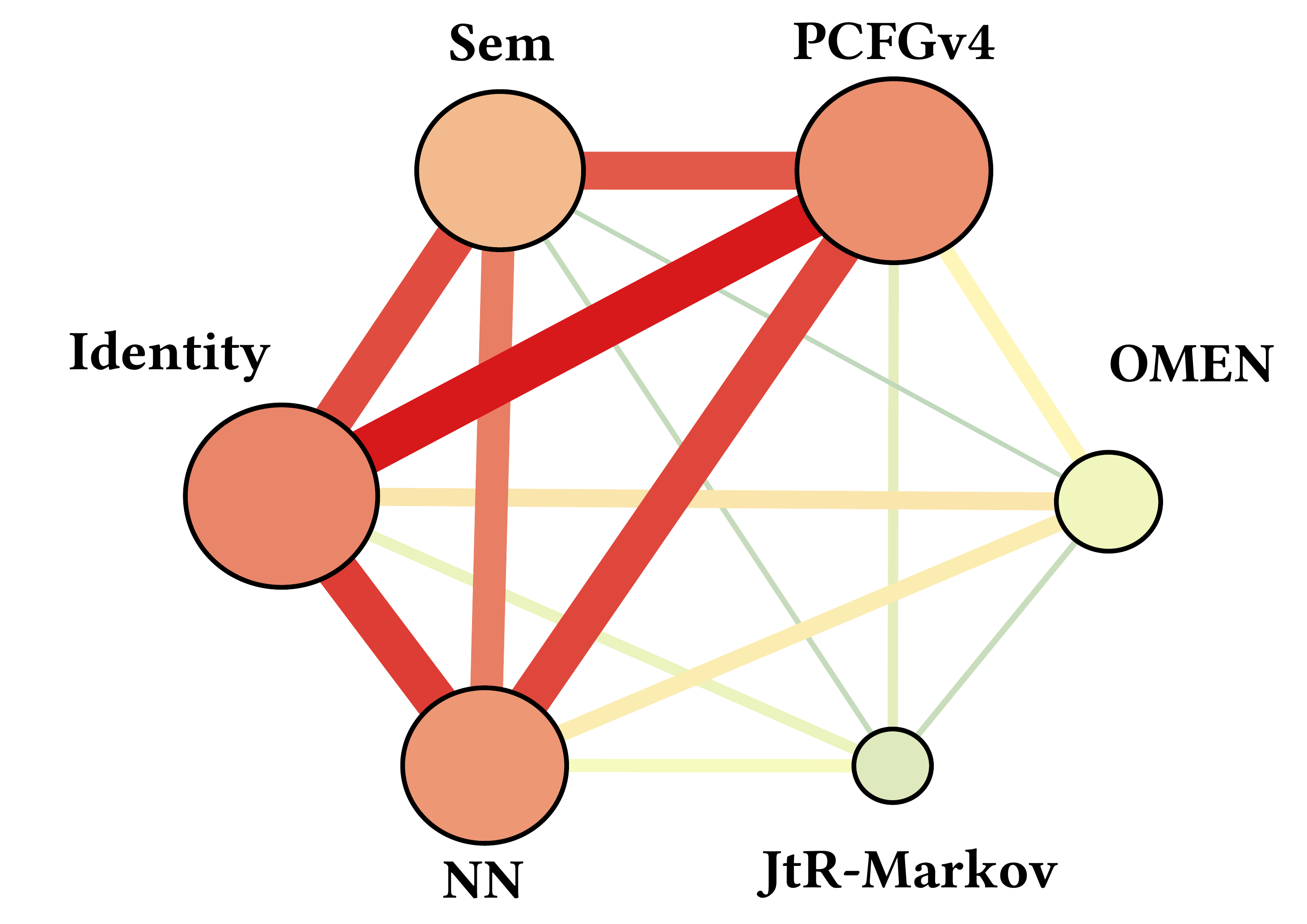}
\includegraphics[width=0.106\columnwidth]{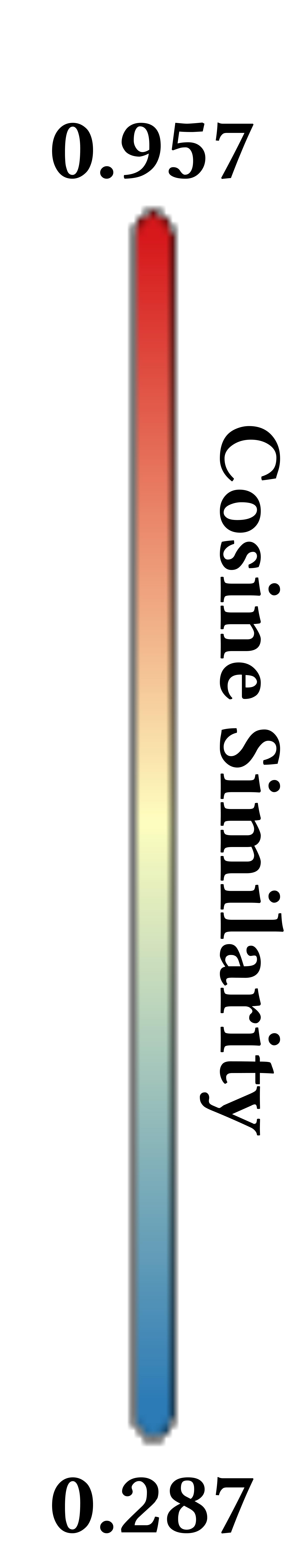}
\caption{Cosine guessing similarity with 1 million guesses.}
\label{fig:cosineGuessing1M}
\end{subfigure}
\hfill
\begin{subfigure}{\columnwidth}
\includegraphics[width=0.77\columnwidth]{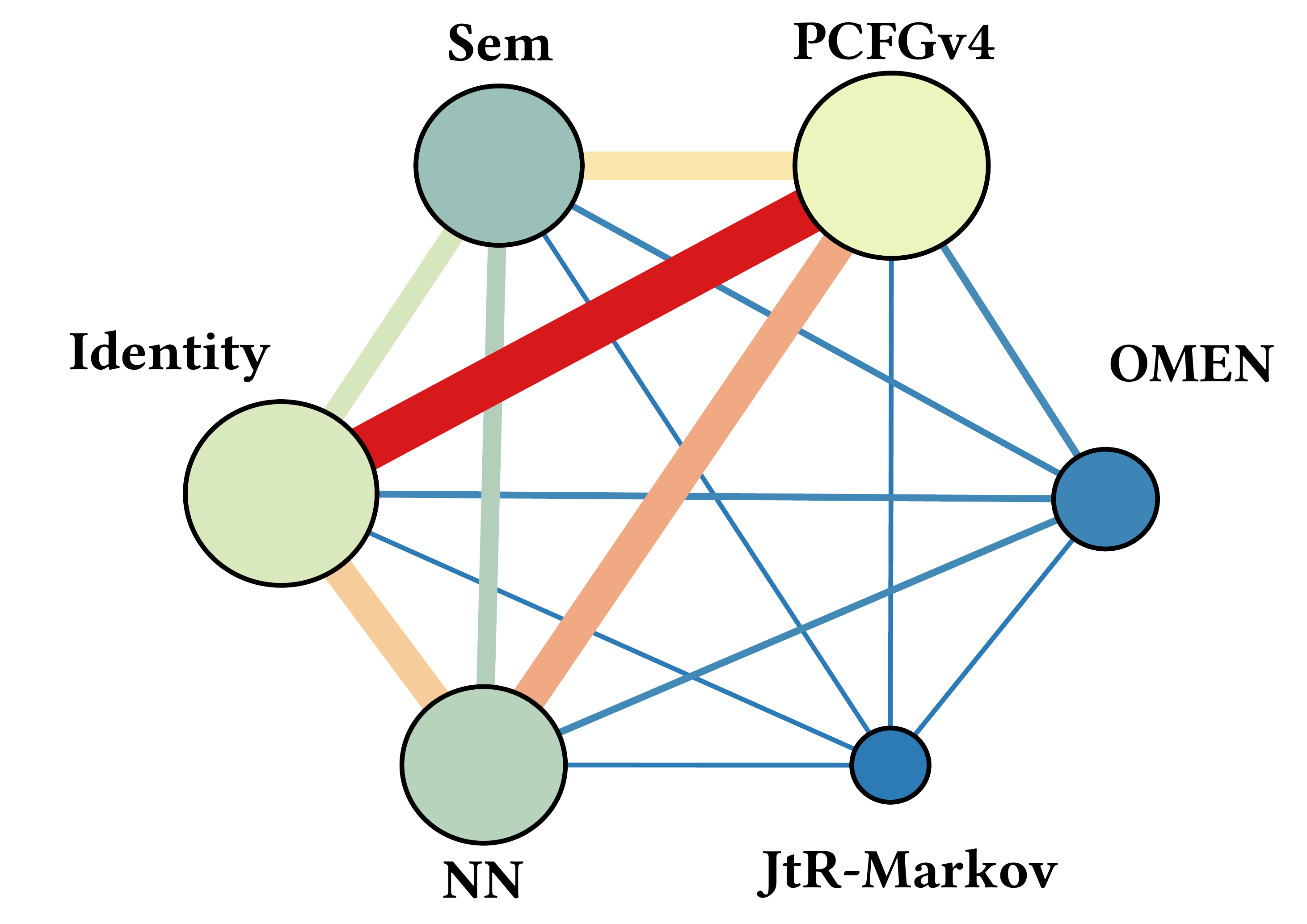}
\includegraphics[width=0.106\columnwidth]{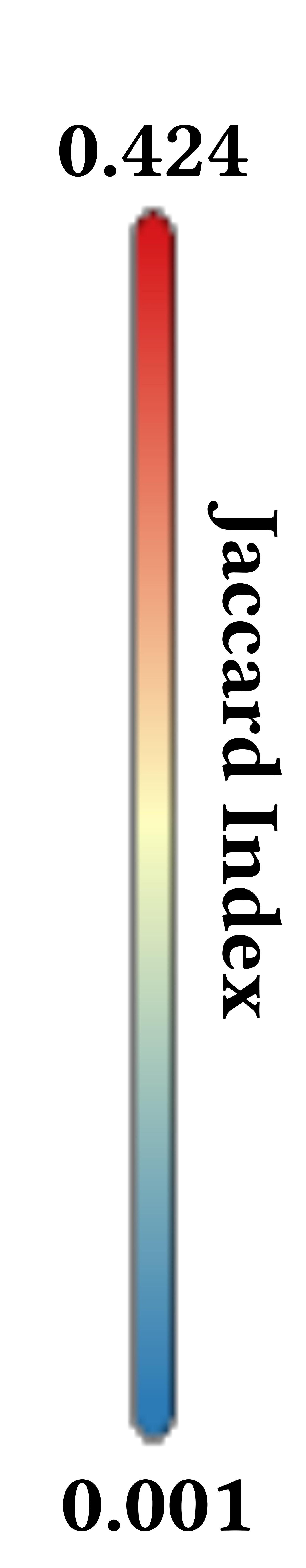}
\caption{Jaccard guessing similarity with 1 million guesses.}
\label{fig:jaccardGuessing1M}
\end{subfigure}
\begin{subfigure}{\columnwidth}
\includegraphics[width=0.77\columnwidth]{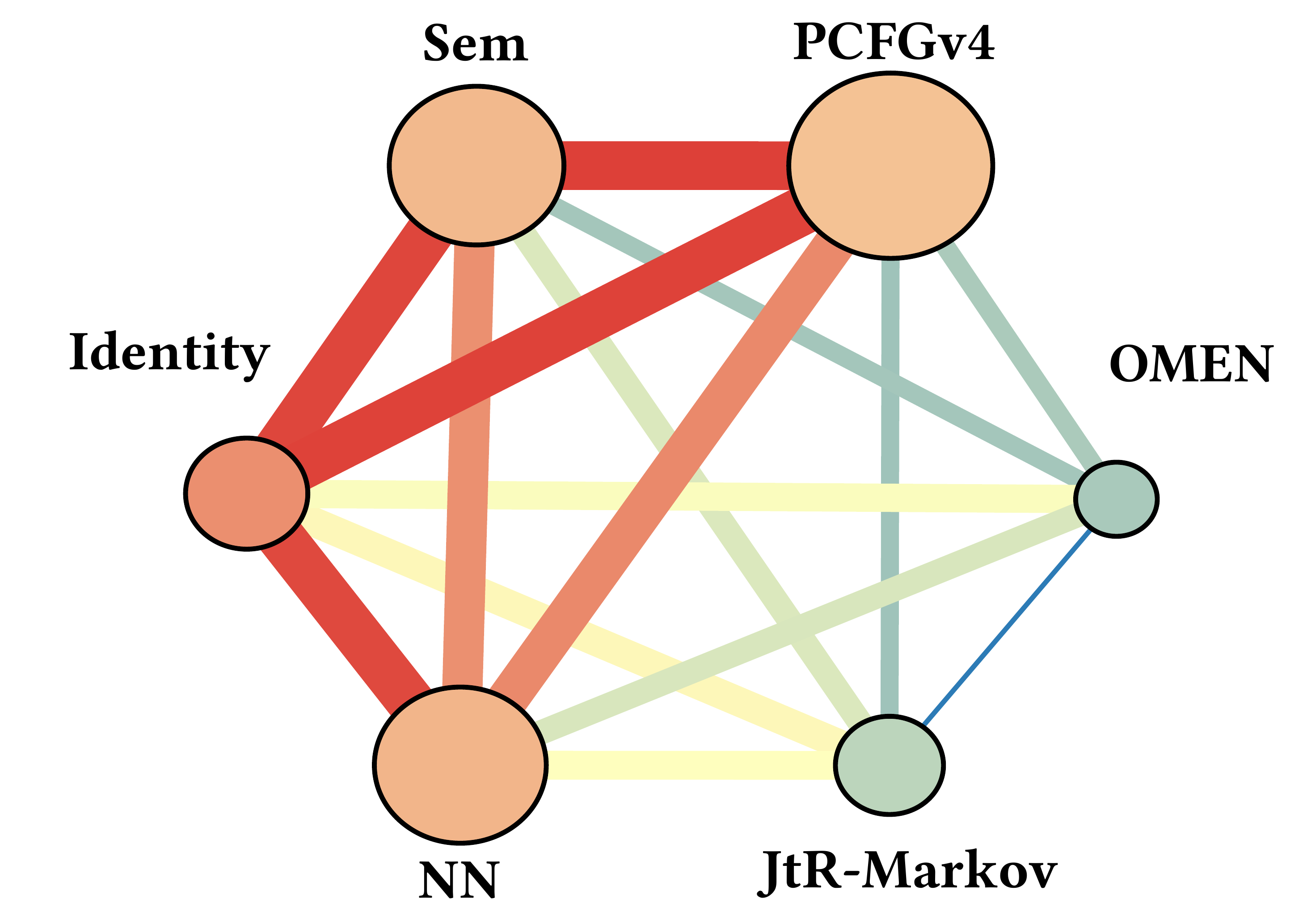}
\includegraphics[width=0.106\columnwidth]{images/ScaleBars/GuessingSimilarityCosine.pdf}
\caption{Cosine guessing similarity with 300 million guesses}
\label{fig:cosineGuessing300M}
\end{subfigure}
\hfill
\begin{subfigure}{\columnwidth}
\includegraphics[width=0.77\columnwidth]{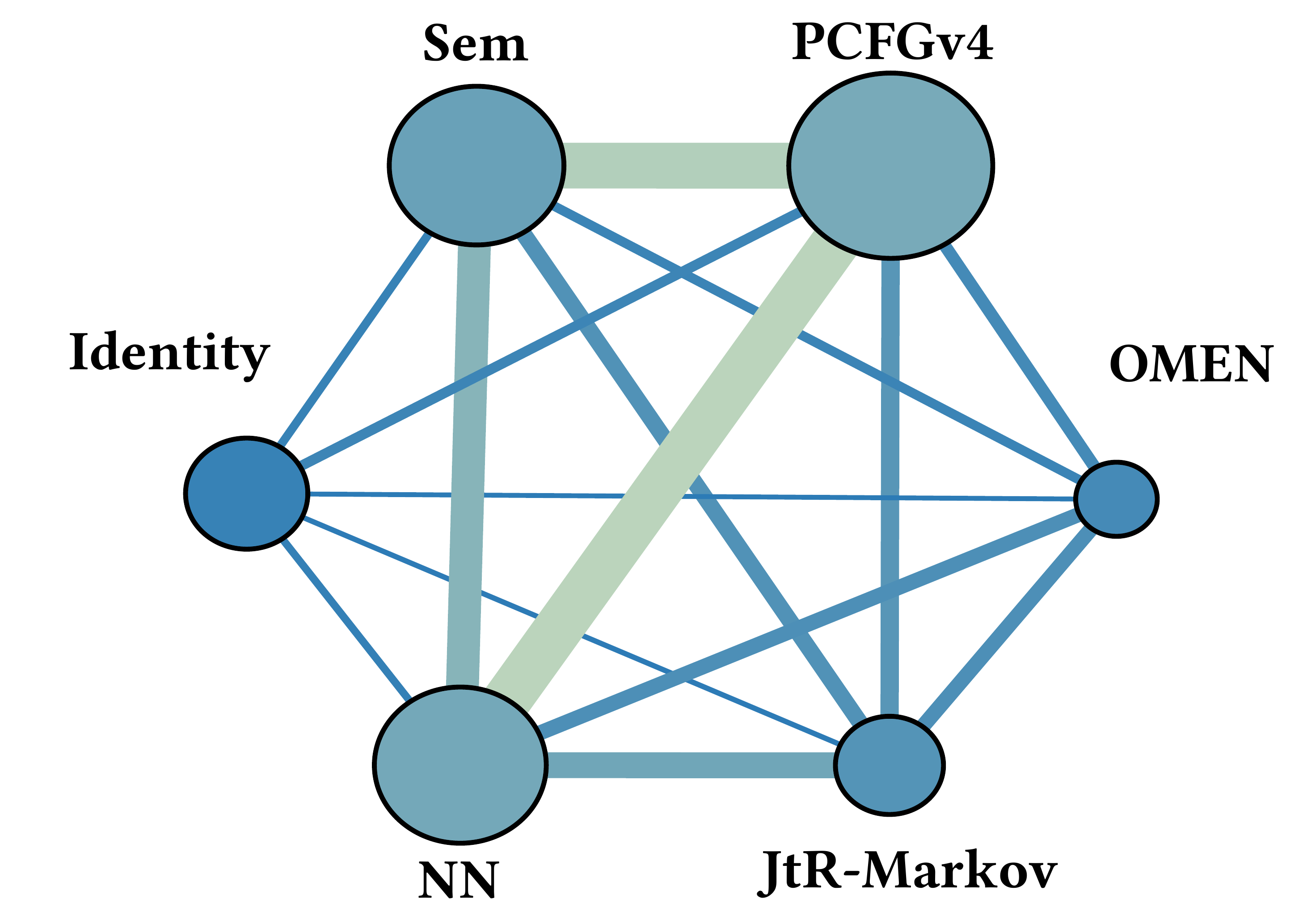}
\includegraphics[width=0.106\columnwidth]{images/ScaleBars/GuessingSimilarityJaccard.pdf}
\caption{Jaccard guessing similarity with 300 million guesses.}
\label{fig:jaccardGuessing300M}
\end{subfigure}
\caption{The cosine and Jaccard guessing similarity (see Eq.~\ref{eq:guessingSimilarity}) between guessers at the cutoffs of 1 million or 300 million guesses. The edge colors represent the similarity value between two guessers. The edge width further highlights the relative similarities  within a figure (thicker means more similar). The node size represents the guesser's average success rate. The node colors represent their average similarity.}
\label{fig:guessingSimilarity1-300}
\end{figure*}

\begin{table}[tb]
    \centering
    \caption{The mean guessing success rate (and standard deviation in parentheses) for each guesser when trained on different-sized subset of Twitter with a cutoff of 300M.}
    \label{table:guessingTrainingSize}
    \begin{tabular}{lccc}
    \toprule
    \multicolumn{1}{c}{} & \multicolumn{3}{c}{\textbf{Training Size}} \\ 
    \cmidrule[0.5pt](lr){2-4}
        \textbf{Guessers} & \textbf{1 Million} & \textbf{10 Million} & \textbf{30 Million}\\
        \hline
        Identity & 21.194(10.172)  & 33.441(13.458) & 39.853(14.186) \\
        JtR & 27.570(12.853) & 27.541(12.846) & 27.527(12.828) \\
        OMEN & 29.077(11.383) & 29.216(10.916) & 29.461(10.942) \\
        Sem & 41.493(13.669) & 46.910(14.432) & 48.021(14.832) \\
        PCFGv4 & 41.517(11.469) & 48.719(13.51) & 51.178(14.242) \\
        NN & 43.688(13.420) & 56.500(14.674) & 58.259(14.970) \\
        \bottomrule
    \end{tabular}
\end{table}
\subsubsection{Sensitivity to Training Size}
\label{subsec:trainingDataSize}
We intend to learn how each guesser's success rate is impacted by the size of training data, drawn from the same distribution. Sampling from the Twitter dataset\footnote{We train on Twitter for this purpose, as opposed to the Merged dataset, since the Merged dataset would contain the testing (target) data.}, we create three different datasets of sizes 1 million, 10 million and 30 million. After training guessers on each dataset, we generate guess lists at a cutoff of 300M and test them against all other datasets. Table \ref{table:guessingTrainingSize} reports the mean success rates by Eq.~\ref{eq:FixedDandGSuccess}. All guessers (except JtR-Markov) improve when trained on the larger dataset, but to various extents. The Identity guesser has the most drastic improvement with training size growth, from 21.2\% to 39.85\%. OMEN and JtR-Markov show the least improvement.
  Sem, PCFGv4, and NN have more modest, but notable improvements, increasing their success rates by 6.5\%, 9.7\%, and 14.6\%, respectively. 

\subsubsection{Guessing Similarity} 
Using our notion of guessing similarity (see Eq.~\ref{eq:guessingSimilarity}), we analyze how similar the guess lists of two guessers are when they are trained on the same training data. Figure \ref{fig:guessingSimilarity1-300} shows the cosine and Jaccard guessing similarity between guessers at cutoffs of 1 million and 300 million guesses. For both 
cutoffs, PCFGv4, Sem, ID and NN share high structural (cosine) similarity when compared to OMEN and JtR (see Figure \ref{fig:cosineGuessing1M} and Figure \ref{fig:cosineGuessing300M}). Interestingly, despite both deploying a Markov approach, JtR and OMEN are dissimilar. 
This is likely because OMEN outputs guesses in probability order, whereas JtR-Markov does not.

Figures \ref{fig:jaccardGuessing1M} and \ref{fig:jaccardGuessing300M} show Jaccard guessing similarity between guessers, capturing the overlap of guessers' guesses, at both cutoffs.  Guessers with higher success rates (see Table \ref{table:guesserSuccess}) seem to have higher Jaccard guessing similarity (or overlap): At 1 million, the two best guessers PCFGv4 and ID share the highest overlap whereas PCFGv4, Sem and NN with the highest success rates at 300 million have highest overlaps. One can also readily observe that the Jaccard guessing similarities decrease as the cutoff increases. This change suggests that by generating more passwords, each guesser has begun demonstrating their own unique guessing behavior (i.e., the percentage overlap between guessers' guess lists decreases).

\begin{figure}[tb]
\includegraphics[width=0.77\columnwidth]{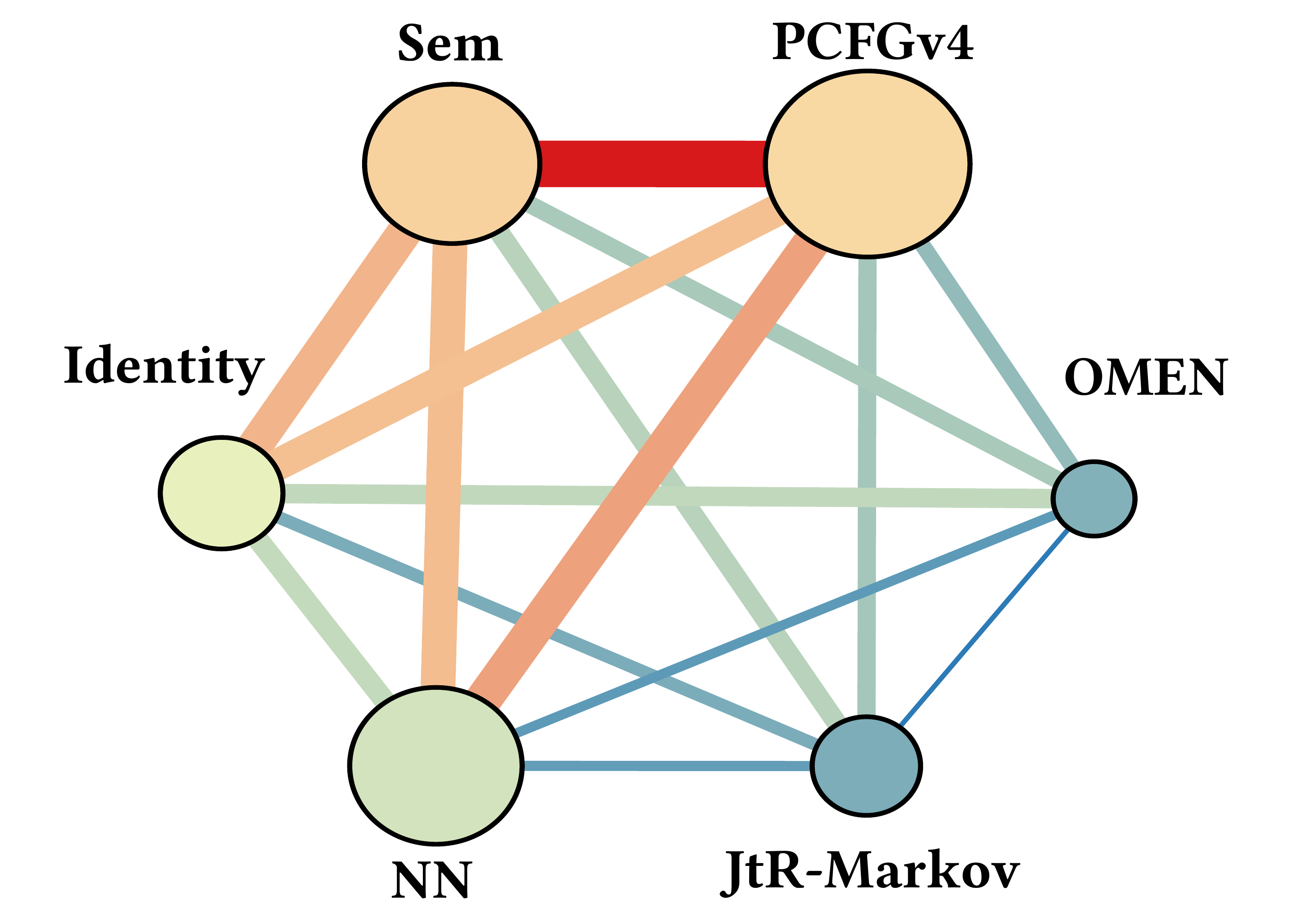}
\includegraphics[width=0.105\columnwidth]{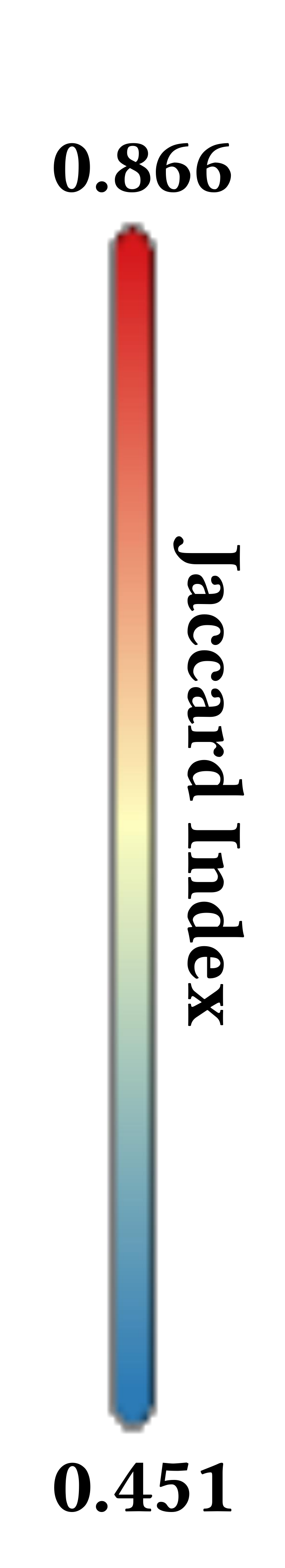}

\caption{The generalized Jaccard successful guessing similarity between guessers. The edge weights and colors represent the similarity of two guessers. The node size represents the guesser's average success rate. The node color represents the guesser's average similarity with other guessers.}
\label{guesserSuccessSimilarity}
\end{figure}
\subsubsection{Successful guessing similarity} Our guessing similarity analyses showed that guessers trained on the same data, generate mostly unique guesses (see Figures \ref{fig:jaccardGuessing1M} and \ref{fig:jaccardGuessing300M}). However, it is possible that many of these unique guesses are unsuccessful. In this light, one might be interested in measuring the uniqueness of \emph{successful} guesses between guessers. To achieve this, we use our successful guessing similarity metric in Eq.~\ref{eq:suc-guessing-similarity} with generalized Jaccard index.\footnote{The generalized Jaccard allows us to weight the successful guesses of each guesser based on their frequencies in the target dataset.} 

As shown in Figure \ref{guesserSuccessSimilarity}, there is still a considerable degree of uniqueness in successful guesses. Even Sem and PCFGv4---with the highest similarity---have a generalized Jaccard index of 0.86, implying that ~14\% of their successful guesses are unique to one guesser. Similarly, NN and Sem, by sharing ~72\% of their successful guesses, owe 28\% of their success to unique passwords. Interestingly, the Identity guesser seems to have moderate Jaccard similarity with any other guesser (i.e., its similarity values range from 0.529 to 0.725) despite its smaller guess lists sizes (i.e., ranging from 2.2 million to 40 million compared to 300 million for all other guessers).  These findings offer two important recommendations: (i) the use of one guesser does not make another guesser entirely redundant, even when the underlying approach or achieved success rates are similar; (ii) The cost-effective Identity guesser can complement any other guessers as it has a relatively high number of successful guesses. 
We explore the gains achieved by combining multiple guessers in our combination attack discussed below in Section \ref{subsec:combinationAttacks}.

\subsection{Combining Guessers}
We evaluate the ability of password guessers to complement one another on a previously unseen dataset (i.e., LinkedIn) in an offline attack scenario.  We begin in Section \ref{subsec:indivGuessers} by evaluating each individual guesser against the LinkedIn dataset. Next we analyze different combinations of guessers in Section \ref{subsec:combinationAttacks}.

\subsubsection{Individual Guessers} \label{subsec:indivGuessers}
To compare guessers' performance, we train each guesser on the Merged dataset, and allow them to each make 2 billion guesses against the LinkedIn dataset. 
As reported in Table \ref{table:LinkedIn}, NN outperforms all others, with a 4.3\% lead over PCFGv4.
PCFG-based (PCFGv4 and Sem) and Identity guessers outperform Markov-based guessers (OMEN and JTR-Markov). Figure \ref{fig:LinkedIn} depicts the percentage of guessed passwords over the number of guesses. JtR-Markov surpasses OMEN close to the end of the attack. Notably, PCFGv4, Identity, and NN traded places for the best guesser before Identity ran out of guesses. 
We next apply our findings from our successful guess similarity experiments to further improve the results using combination attacks.

%

\begin{table}[tb]
        \centering
        \caption{Percentage of LinkedIn passwords successfully guessed.  Guessers are trained on the Merged dataset and cutoff at 2 billion guesses.}
        \label{table:LinkedIn}
        \begin{tabular}{lccccc}
        \toprule
        \textbf{OMEN} & \textbf{JtR-Markov} & \textbf{Identity} & \textbf{Sem} & \textbf{PCFGv4} & \textbf{NN}\\
        \midrule
        35.641 & 37.028 & 47.561 & 55.159 & 58.798 & 63.145\\
        \bottomrule
        \end{tabular}
\end{table}

\begin{figure}[t]
    \begin{center}
        \includegraphics[width=\columnwidth]{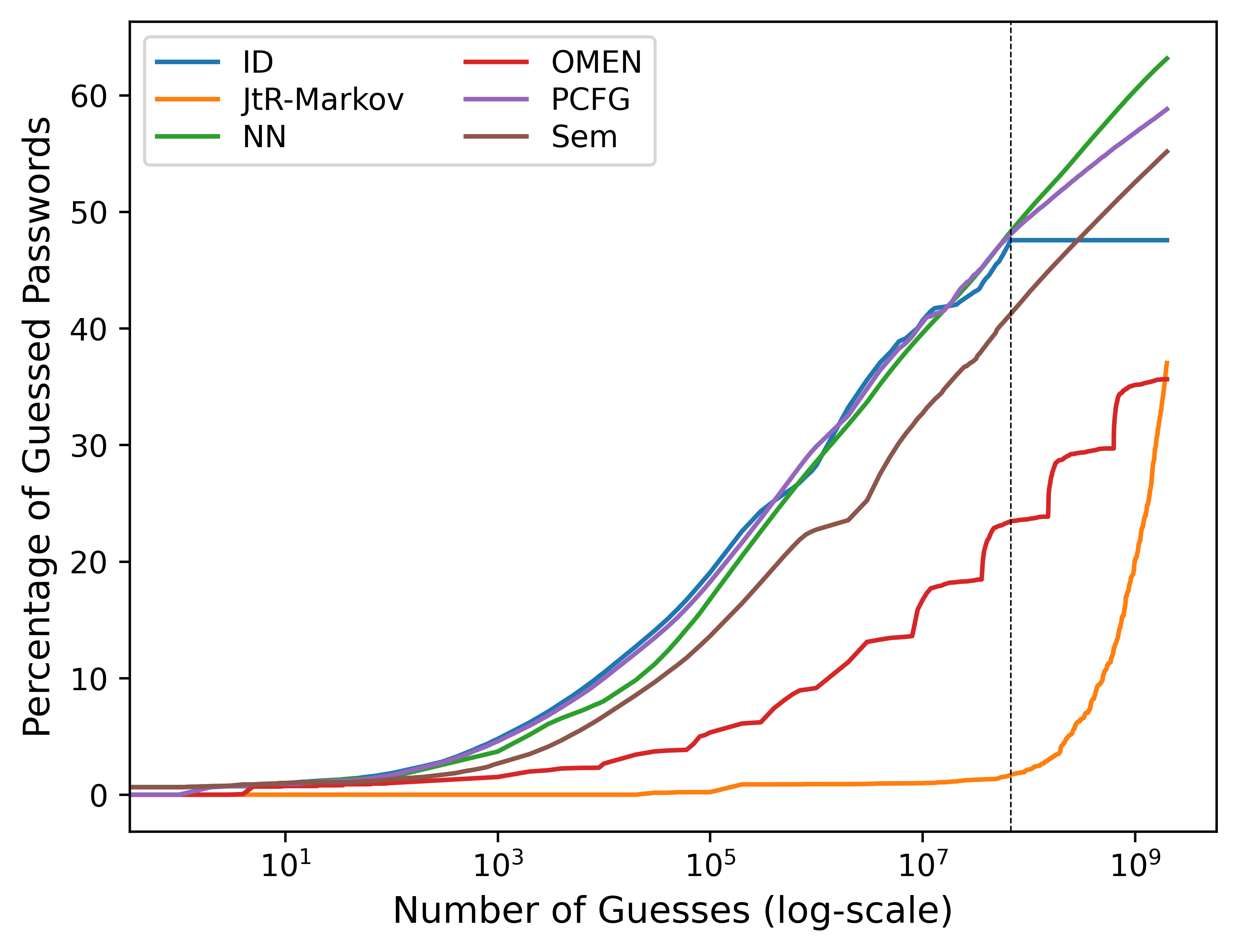}
    \vspace{-10pt}
    \caption{Performance of guessers trained on the Merged dataset and tested against LinkedIn. The dotted line marks the Identity guesser's last guess at 67 million guesses, each other guesser made 2 billion guesses.}
    \label{fig:LinkedIn}
    \vspace{-12pt}
    \end{center}
\end{figure}

\subsubsection{Combination Attacks}
\label{subsec:combinationAttacks}
Our analyses shed light on how guessers complement each other by generating unique successful guesses. We also learn that the Identity guesser not only complements every other guesser, but also often outperforms some advanced guessers. These findings motivate us to design a \textit{combination attack} where the Identity guesser is used to attack a password dataset prior to the application of a set of other guessers. This hybrid approach is recommended in John the Ripper where a traditional attack follows wordlist mode. We run many independent combination attacks on LinkedIn. 
Each guesser is trained on the Merged Dataset and produces two billion guesses. 

Table \ref{table:Combinations} reports the result of our combination attacks. When ID is combined with any individual guesser (e.g., ID+O, ID+J, etc.), the combination attacks experience a notable degree of improvement compared to an individual guesser's performance (compare the columns of sole guesser vs. ID + guesser). JtR-Markov experiences the largest improvement of 18.67\%. Even guessers with high success rates (e.g., NN and PCFGv4) realize improvements of 1\% to 4\%. By dramatically increasing the success rate of weaker guessers (e.g., OMEN and JTR-Markov), this combined approach makes less resource intensive guessers more competitive.

\begin{table*}[tb]
    \caption{The percentage of LinkedIn passwords cracked by an offline attack using the Identity guesser followed by a combination of guessers, each making two billion guesses. The names of guessers are shortened to their first letters: (P)CFG, (O)MEN, (N)N, (S)em, and (J)tR-Markov. Each combination attack is color-coded by its runtime for training and guess generation: Green is less than 8 hours (i.e., a workday), yellow is less than 16 hours, and red is over two weeks.}
    \label{table:Combinations}
    \setlength{\tabcolsep}{2pt}
    \begin{subfigure}{\linewidth}
        \centering
        \begin{tabular}{lccccccccccccc}
        \toprule
        \multicolumn{2}{c}{\textbf{Sole Guesser}}&
        \multicolumn{2}{c}{\textbf{ID + 1 Guesser}}&
        \multicolumn{4}{c}{\textbf{ID + 2 Guessers}}&
        \multicolumn{4}{c}{\textbf{ID + 3 Guessers}}&
        \multicolumn{2}{c}{\textbf{ID + 4 Guessers}}\\
        \cmidrule[0.7pt](lr){1-2}\cmidrule[0.7pt](lr){3-4}\cmidrule[0.7pt](lr){5-8}\cmidrule[0.7pt](lr){9-12}\cmidrule[0.7pt](lr){13-14}
        \textbf{guesser}&\textbf{guessed}&
        \textbf{guesser}&\textbf{guessed}&
        \textbf{guessers}&\textbf{guessed}&
        \textbf{guessers}&\textbf{guessed}&
        \textbf{guessers}&\textbf{guessed}&
        \textbf{guessers}&\textbf{guessed}&
        \textbf{guessers}&\textbf{guessed}\\
        \cellcolor{lgreen}OMEN & 35.641 & \cellcolor{lgreen}O & 52.272 & \cellcolor{lgreen}O+J & 57.628 & \cellcolor{lgreen}J+P & 65.038 & \cellcolor{lgreen}O+J+S & 63.825 & \cellcolor{pink!38}O+P+N & 67.336  & \cellcolor{yellow!25}O+J+S+P & 66.931 \\
        \cellcolor{lgreen}JtR-M & 37.028 & \cellcolor{lgreen}J& 55.693 & \cellcolor{lgreen}O+S& 61.536 & \cellcolor{pink!38}J+N& 65.909 &  \cellcolor{lgreen}O+J+P& 65.466  & \cellcolor{yellow!25}J+S+P& 66.614  & \cellcolor{pink!38}O+J+S+N& 67.484  \\
        \cellcolor{lgreen}Sem & 55.159 & \cellcolor{lgreen}S & 59.773 & \cellcolor{lgreen}O+P & 62.907  & \cellcolor{yellow!25}S+P & 63.866 & \cellcolor{pink!38}O+J+N & 66.199 & \cellcolor{pink!38}J+S+N & 67.247 & \cellcolor{pink!38}O+J+P+N & 68.060\\
        \cellcolor{lgreen}PCFGv4 & 58.798 & \cellcolor{lgreen}P & 61.158 & \cellcolor{pink!38}O+N & 65.241 & \cellcolor{pink!38}S+N & 66.411  & \cellcolor{yellow!25}O+S+P & 65.260  & \cellcolor{pink!38}J+P+N & 67.855 &\cellcolor{pink!38}O+S+P+N & 68.169 \\
        \cellcolor{pink!38}NN & 63.145 & \cellcolor{pink!38}N & 64.876 & \cellcolor{lgreen}J+S & 63.255 & \cellcolor{pink!38}P+N & 67.082 & \cellcolor{pink!38}O+S+N & 66.705 &\cellcolor{pink!38}S+P+N & 67.943 & \cellcolor{pink!38}J+S+P+N & 68.605 \\
        %
        \bottomrule
        \end{tabular}
        \end{subfigure}
\end{table*}

As shown in Table \ref{table:Combinations}, when more guessers are combined with the Identity guesser, the success rate increases, but with diminishing returns. 
For example, compare J to J+S (+7.562\%), J+S to J+S+P (+3.359\%), and J+S+P to J+S+P+N (+1.991\%). 
There seems to be two factors in determining which additional guesser can improve an existing combination attack the most: the success rate of the candidate guesser, and its successful guessing similarities with each of the combined guessers. A candidate guesser with higher success rate has more potential to improve the combined guesser (e.g., compare O+J to O+S). However, a candidate guesser with low successful guessing similarities can be a more effective addition. This interplay of success rate and successful guessing similarities might make a less successful guesser with lower successful guessing similarities more attractive. For example, the weaker JtR and stronger 
Sem have successful guessing similarities of 0.675 and 0.902 with PCFGv4. The addition of JtR to the combination attack of ID+P offers more improvement than the addition of Sem (3.88\% vs.  2.71\%).


Each additional guesser also incurs higher runtime and resource requirements. The attacks color-coded green in Table \ref{table:Combinations} could be completed within one workday (or 8 hours),  whereas the yellow and red color-coded attacks must be run overnight (within 8-16 hours) and over two weeks, respectively. The neural network is the largest contributor to runtime in our combinations and also adds GPU requirements. Interestingly, unlike the sole guesser attacks, the slower combination attacks don't always outperform the faster attacks. For example, the O+J+P attack (65.466\%) runs in under 8 hours while S+P (63.866\%) and O+S+P (65.250\%) take between 8-16 hours, and N (64.875\%) and O+N (65.241\%) take over 2 weeks. This result implies that competitive success rates can be achieved by  the combination of computationally-cheap guessers with less resources. These combination attacks serve as a competitive alternative for practitioners without access to GPU resources, or with time constraints to perform reactive checking (e.g., J+P attack outperforms N while running within a workday and without GPU resources).   

\section{Discussion and Recommendations} \label{sec:discussion}
We provide further context by presenting use cases of our framework and a set of recommendations based on our empirical results described in Section \ref{subsec:experimental-setup}. 

\subsection{Framework Use Cases}\label{sec:discussion:usecases}
Our framework is useful for both practitioners and researchers in:
\vskip 2mm
\noindent\textbf{(1) Evaluating new guessers and/or settings.} As new password guessers emerge, our framework can be applied to update our knowledge of how to best  combine password guessers, by adding the new guesser to $\mathcal{G}$ and recomputing the formulas in Sec.~\ref{sec:framework}. Our framework equips practitioners and researchers to assess whether or not emerging password guessers offer some complementary power to existing and deployed guessers. This supports a more informed selection of sets of password guessers for password checking. 
Using our framework supports evaluation beyond typical practices of simply benchmarking individual guesser's success rates. 
\vskip 2mm
\noindent\textbf{(2) Identifying effective training data for password checking.} For guessing scenarios where the target passwords are hashed and salted, our framework can still be applied with the generalized Jaccard index comparison metric, assuming that the salt of each hashed password is available to the administrator, as is typically the case. We describe how this can be accomplished in Proposition 1, the proof for which can be found in the Appendix.

\begin{proposition}\label{prop:hashed-jaccard}
Assuming a candidate training password list $A$ and salted \& hashed password list $B_h$, the generalized Jaccard index between $A$ and $B_h$ can be computed by:
\begin{equation}
    J(A,B_h)= \frac{\fmin(A,B_h)}
{|A| + |B_h|- \fmin(A,B_h)},
\label{eq:gen-jaccard-hashed}
\end{equation}
where
$$
\fmin(A,B_h) = \sum\limits_{\mathclap{w \in \supp{A}}} min\left(o(w,A), g(w, B_h)\right).
$$
Here, $g(w,B_h) = \sum_{y\in B_h}\ind{y = H(w+s_y)}$, $\ind{.}$ is the indicator function, $s_y$ and $H(.)$ are, respectively, the salt and hash function originally used for computation of the salted \& hashed password $y$. Here, $|A|$ and $|B_h|$ are the number of  passwords in $A$  and the number of salted \& hashed passwords in $B_h$. Also, $\supp{A}$ is the set of unique passwords in the training dataset $A$, and $o(w,A)$ is the number of occurrences of $w$ in $A$.
\end{proposition}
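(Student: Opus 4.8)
The plan is to reduce the claim to the ordinary generalized Jaccard index of Eq.~\ref{eq:gen_jaccard_sim}. Let $B$ denote the (unobserved) plaintext multiset whose salted \& hashed image is $B_h$, so that each $y \in B_h$ arises as $y = H(w_y + s_y)$ for some plaintext $w_y$, with occurrence counts $o(\cdot, B)$. Since $A$ is plaintext and $B$ is its hashed counterpart, the quantity we actually want is $J(A,B)$; the work is therefore to rewrite $J(A,B)$ entirely in terms of data available to the administrator, namely $A$, the hashes in $B_h$, and the per-entry salts $s_y$.

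First I would establish the bridging identity $g(w, B_h) = o(w, B)$ for every candidate $w \in \supp{A}$. By definition $g(w, B_h)$ counts those $y \in B_h$ for which $H(w + s_y) = y$. Because each hashed entry carries its own salt, two plaintext occurrences of a repeated password are detected independently, so no duplicates are lost in the count. The only way $g(w, B_h)$ could exceed $o(w, B)$ is through a hash collision: some $y$ generated from a plaintext $w' \neq w$ yet satisfying $H(w + s_y) = y = H(w' + s_y)$. Ruling this out is the main obstacle, and it is exactly where the cryptographic assumption enters --- under collision resistance of $H$ (injectivity on the relevant domain for a fixed salt), such false matches do not occur, and the identity holds.

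With $g(w,B_h) = o(w,B)$ in hand, the remainder is purely algebraic. In the numerator of Eq.~\ref{eq:gen_jaccard_sim}, each term $\min(o(w,A), o(w,B))$ vanishes unless $o(w,A) > 0$, so the sum over $U = \supp{A} \cup \supp{B}$ collapses to a sum over $\supp{A}$ and equals $\sum_{w \in \supp{A}} \min(o(w,A), g(w,B_h)) = \fmin(A,B_h)$. For the denominator I would apply the identity $\max(a,b) = a + b - \min(a,b)$ termwise, giving $\sum_{w \in U} o(w,A) + \sum_{w \in U} o(w,B) - \fmin(A,B_h)$. Here $\sum_{w \in U} o(w,A) = |A|$ and $\sum_{w \in U} o(w,B) = |B| = |B_h|$, since summing occurrence counts over all distinct passwords recovers the with-duplicates list size, and the number of hashed entries equals the number of plaintexts that were hashed. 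Substituting these yields the stated denominator $|A| + |B_h| - \fmin(A,B_h)$, and hence Eq.~\ref{eq:gen-jaccard-hashed}.
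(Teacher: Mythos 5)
Your proposal is correct and follows essentially the same route as the paper: the termwise identity $\max(a,b)=a+b-\min(a,b)$ for the denominator (the paper's Lemma~\ref{lem:max-min-rel}), the collapse of the numerator sum from $\supp{A}\cup\supp{B}$ to $\supp{A}$ (Lemma~\ref{lem:min-simple}), and the substitution $o(w,B)=g(w,B_h)$, $|B|=|B_h|$. Your explicit remark that $g(w,B_h)=o(w,B)$ hinges on collision resistance of $H$ is a small but worthwhile addition that the paper's proof leaves implicit.
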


While the offline Identity attack success rate can be used as a proxy for measuring the similarity of a candidate training dataset $A$ with a salted \& hashed password list $B_h$, the generalized Jaccard index is more informative. For example, consider two candidate training datasets: $Y$ with hundreds of millions of entries, and dataset $Z$ with one million entries.  If they each achieve a 50\% success rate, dataset $Z$ should be considered more similar and selected as the best training set. However, the pure Identity attack success rate falls short in distinguishing $Z$ from $Y$ as opposed to the generalized Jaccard index which would assign a higher similarity score to $Z$.

\vskip 2mm
\noindent\textbf{(3) Identifying complementary guessers for longer offline attacks.} Guessers might offer different complementary patterns for short and long guessing sessions (e.g., online vs.\ offline attacks). Our experiments show the complimentary patterns of guessers for shorter sessions (up to 300 million guesses). To identify complementary guessers for longer attacks (e.g., approx.\ $10^{14}$ guesses),  our successful guessing similarity (recall Eq.\ \ref{eq:suc-guessing-similarity}) can be employed, in combination with Monte Carlo methods \cite{DellAmico2015}), to approximate complementary effects of guessers. Rather than directly computing $L_{ik} \cup D_{\ell}$ in Eq.\ref{eq:suc-guessing-similarity} by allowing guesser $g_i$ (trained on $D_k$) to generate the guess list $L_{ik}$, one can approximately determine the elements of $L_{ik} \cup D_{\ell}$ (i.e., the passwords that would be successfully guessed in a leaked plaintext testing dataset $D_{\ell}$) by: (1) setting the threshold $\tau$ for maximum number of guesses (2) for each password $w \in D_{\ell}$, estimate its \emph{guess number} (i.e., the minimum number of required guesses) using Dell'Amico et al.'s approach \cite{DellAmico2015} (3) if the guess number is less than the threshold $\tau$, it belongs to $L_{ik} \cup D_{\ell}$. We note that to apply this procedure, each guesser $g_i$ should be able to assign a probability to a password. 


\subsection{Recommendations}
Our work provides a number of practical recommendations (R1-R4) for practitioners auditing their passwords. Of course, this set of recommendations may update as more guessers and training datasets are analyzed using our framework. While our work can be directly applied to reactive checking, it has a natural extension to proactive checking, as guessers that generate probability scores for a given password can be applied as password meters. 

\vskip 1.15mm
\noindent \textit{\textbf{R1: Try publicly-available leaked passwords first.}} Our results show that an attacker can be relatively successful by applying the Identity guesser (i.e., the training data of leaked passwords as a guess list) before considering any advanced guessers.  This might seem a familiar concept, occasionally applied in practice (e.g., John the Ripper \cite{JohnTheRipper}). However, to the best of our knowledge, the impact and benefits of using an Identity guesser vs. other guessers has not been extensively quantified. For the first 1 million guesses (a number considered feasible for online attacks \cite{FHO2016_PushingString}), the Identity guesser along with PCFGv4 outperform more advanced guessers. For offline attacks, the Identity guesser performed surprisingly well; with only 22 million guesses, on average it achieved 64\% of the success rate of the top offline guesser PCFGv4 with 300 million guesses (see Table \ref{table:guesserSuccess}). Additionally, in our LinkedIn experiments, the Identity guesser, with 67 million guesses, had 75\% the success rate of the top guesser NN, with 2 billion guesses (see Table \ref{table:LinkedIn}). These experiments strongly suggest that the Identity guesser can achieve high guessing success rates, comparable to the top guessers, while using at least an order of magnitude fewer guesses.   
Thus, we strongly recommend that leaked password datasets should be the first priority in password checking. 

\vskip 1.15mm
\noindent \textit{\textbf{R2: Apply combinations of guessers.}} 
Our results for guessing similarity show that the majority of guesses produced by each guesser are unique, even when the underlying approach or success rate is similar.  Even for \emph{successful} guesses, each tested guesser is able to crack passwords that others overlook (e.g., the Identity guesser found millions of LinkedIn passwords overlooked by other guessers).  Our analysis indicates that no single guesser is able to completely substitute another, and they can complement each other when used together. However, some combinations are more effective than others. Our framework can be used to assist identification of complementary guesser combinations.  We also show how some combinations of guessers can have comparably high success rates with lower computing requirements. For example, in less than 8 hours, Identity + PCFGv4 + JtR-Markov can achieve a success rate that compares to Identity + NN (which takes about 2 weeks). Considering both success rate and computing requirements, our results from targeting LinkedIn passwords suggest that a reasonable strategy is to apply this ordering of guessers: Identity, PCFGv4, JtR-Markov, Sem, OMEN, NN. As discussed in Section \ref{sec:discussion:usecases}, our framework can be used to identify complementary combinations involving additional guessers, and also for long guessing sessions. 

\vskip 1.95mm
\noindent \textit{\textbf{R3: Train with datasets similar to target.}}
Our results show that when choosing training data, the similarity to the target data is an important factor.\footnote{These results confirm and complement previous findings \cite{Ji_password_2017} by employing different features, more and larger datasets, and more password guessers. We also show how similarity can be measured between a hashed \& salted target dataset and a plaintext candidate training set.} Thus, our dataset similarity metric can be used to decide on the most effective training dataset. The most effective dataset can be identified, even when the target dataset is hashed, as outlined in Section \ref{sec:discussion:usecases}. 
\vskip 1.95mm
\noindent \textit{\textbf{R4: Consider using less training data.}} 
Using more training data takes more computing resources and longer training times. Our results indicate that training dataset size does not correlate with guessing success rates.  Although when sampling from the same dataset (Twitter), we observed that data size can increase training effectiveness, the gains between 1 million and 30 million training passwords are not as large as one might expect. Therefore, if time or space constraints exist, a reasonable compromise would be to use a sample of training data from a dataset with high similarity (such as Twitter in our experiments).

\section{Conclusion and Future Work}


We provide an in-depth analysis of password guessers, revealing insights regarding 
when and how to use them (both alone and in combination). This work demonstrates that combinations of computationally-cheap guessers can be comparably effective to more resource-intensive guessers. Our work also points towards a set of 
recommendations for practitioners who use password checking tools.  

Our framework (i.e., various metrics and statistics) for comparing password guessers and training datasets can be utilized or extended by practitioners and researchers for future password studies. While our present work supports decisions about how to best combine password guessers, there remains some human interpretation of the results---i.e., our framework can help identify the guessers that are most dissimilar and have the highest success rate; however the final decision of which to combine should be made by the human involved (and consider computational efficiency as well).  As such, an interesting direction is to develop artificial intelligence algorithms to automate finding combinations of guessers with a maximum success rate under budgeted time and resource requirements. Our present work lays the foundations for such future directions.  Another interesting direction for future work is to explore how to summarize a large training dataset into a smaller dataset that trains guessers just as well.  Such a smaller training dataset would decrease training time and aim to maximize success rate.  

\bibliographystyle{IEEEtran}
\bibliography{cite.bib}
\clearpage
\appendix
\section{Appendix A: Proofs}\label{appx}
\begin{proof}[Proof of Proposition \ref{prop:hashed-jaccard}]
By Lemma \ref{lem:max-min-rel} and Lemma \ref{lem:min-simple}, the generalized Jaccard index between the password list $A$ and unhashed password list $B$ (which is not accessible) can be computed by:
\begin{equation}
 J(A,B)= \frac{\sum\limits_{\mathclap{w \in \supp{A}}} min\left(o(w,A), o(w, B)\right)}
{|A| + |B|- \sum\limits_{\mathclap{w \in \supp{A}}} min\left(o(w,A), o(w, B)\right)},
\label{eq:mid-equation-proof}
\end{equation}
Defining $g(w, B_h) = \sum_{y\in B_h}\ind{y = H(w+s_y)}$ for counting the number of occurrences of password $w$ in the salted \& hashed password list $B_h$, we note that $o(w,B)=g(w,B_h)$ and $|B|$ = $|B_h|$. So Eq.~\ref{eq:mid-equation-proof} is equivalent to:
$$
J(A,B_h)= \frac{\sum\limits_{\mathclap{w \in \supp{A}}} min\left(o(w,A), g(w, B_h)\right)}
{|A| + |B_h|- \sum\limits_{\mathclap{w \in \supp{A}}} min\left(o(w,A), g(w, B_h)\right)}.
$$
Letting $\fmin(A,B_h) = \sum\limits_{\mathclap{w \in \supp{A}}} min\left(o(w,A), g(w, B_h)\right)$, we derive Eq.~\ref{eq:gen-jaccard-hashed}.
\end{proof}

\begin{lemma}\label{lem:max-min-rel}
Let $o(w,A)$ and $o(w,b)$ be the number of occurrences of password $w$ in password lists $A$ and $B$ respectively.  We have 
\begin{align}
  &\sum\limits_{\mathclap{w \in \left(\supp{A}\cup\supp{B}\right)}} min\left(o(w,A), o(w, B)\right) =\nonumber\\
  &|A| + |B| - \sum\limits_{\mathclap{w \in \left(\supp{A}\cup\supp{B}\right)}} max\left(o(w,A), o(w, B)\right).
\label{eq:max-min-rel}
\end{align}
Here, $|A|= \sum_{w \in \supp{A}} o(w,A)$ and $|B| = \sum_{w \in \supp{B}} o(w,B)$ are the number of passwords in $A$ and $B$ respectively. Also, $\supp{A}$ is the set of unique passwords in $A$.
\end{lemma}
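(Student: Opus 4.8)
The plan is to reduce the claimed identity to the elementary pointwise fact that $\min(a,b)+\max(a,b)=a+b$ for any two real numbers $a,b$. First I would instantiate this with $a=o(w,A)$ and $b=o(w,B)$ for a fixed password $w$, obtaining $\min(o(w,A),o(w,B))+\max(o(w,A),o(w,B))=o(w,A)+o(w,B)$, and then sum both sides over all $w$ in the union $U=\supp{A}\cup\supp{B}$. This immediately yields
\begin{equation*}
\sum_{w\in U}\min(o(w,A),o(w,B))+\sum_{w\in U}\max(o(w,A),o(w,B))=\sum_{w\in U}o(w,A)+\sum_{w\in U}o(w,B).
\end{equation*}
Rearranging this into the form of Eq.~\ref{eq:max-min-rel} is then purely algebraic, once the two right-hand sums are identified with $|A|$ and $|B|$.

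The one step requiring a short argument is showing $\sum_{w\in U}o(w,A)=|A|$ (and symmetrically for $B$). Here I would use that any password $w\in U\setminus\supp{A}$ does not occur in $A$, so $o(w,A)=0$ and contributes nothing to the sum; hence the sum over $U$ collapses to the sum over $\supp{A}$, which is exactly the definition $|A|=\sum_{w\in\supp{A}}o(w,A)$ stated in the lemma. The same reasoning applied to $B$ gives $\sum_{w\in U}o(w,B)=|B|$.

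Substituting these two identities and moving the $\max$ sum to the other side of the equation produces exactly Eq.~\ref{eq:max-min-rel}, completing the proof. The only subtlety---and the sole place where the specific index set $U$ matters---is this extension/collapse of the summation range; everything else is the term-by-term application of a single scalar identity. I therefore expect no genuine obstacle, the main point being simply to justify that enlarging the index set from $\supp{A}$ (or $\supp{B}$) to their union $U$ introduces only zero terms.
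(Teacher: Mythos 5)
Your argument is correct and matches the paper's proof essentially step for step: both rest on the scalar identity $\min(a,b)+\max(a,b)=a+b$ summed over $\supp{A}\cup\supp{B}$, followed by collapsing $\sum_{w}o(w,A)$ and $\sum_{w}o(w,B)$ to $|A|$ and $|B|$ via the observation that the extra terms vanish. No differences worth noting.
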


\begin{proof}
One can observe that for any two numbers $a$ and $b$:
\begin{align*}
  min\left(a, b\right)+max\left(a, b\right) =a + b.
\end{align*}
Using this equality, we can derive
\begin{align*}
 &\sum\limits_{\mathclap{w \in \left(\supp{A}\cup\supp{B}\right)}} \Big[min\left(o(w,A), o(w,B)\right)&\\
 & + max\left(o(w,A), o(w,B)\right)\Big]&\\
  &=\sum\limits_{\mathclap{w \in \left(\supp{A}\cup\supp{B}\right)}} o(w,A) + o(w, B)&\\
  &=\sum\limits_{\mathclap{w \in \left(\supp{A}\cup\supp{B}\right)}} o(w,A)+ \sum\limits_{w \in \left(\supp{A}\cup\supp{B}\right)} o(w, B)&\\
  &=\sum\limits_{\mathclap{w \in \supp{A}}} o(w,A)+ \sum\limits_{w \in \supp{B}} o(w, B).&\\
\end{align*}
The last equality holds as $o(w,A)=0$ when $w \notin A$ and $o(w,B)=0$ when $w \notin B$. By decomposing the first summation, we have shown
\begin{align*}
 &\sum\limits_{\mathclap{w \in \left(\supp{A}\cup\supp{B}\right)}} min\left(o(w,A), o(w, B)\right)&\\
 &+ \sum\limits_{\mathclap{w \in \left(\supp{A}\cup\supp{B}\right)}} max\left(o(w,A), o(w, B)\right)=|A|  + |B|,&
\end{align*}
where  $|A|= \sum_{w \in \supp{A}} o(w,A)$ and $|B| = \sum_{w \in \supp{B}} o(w,B)$. By rearranging the terms of this equality, we derive Eq.~\ref{eq:max-min-rel}.
\end{proof}

\begin{lemma}\label{lem:min-simple}
Letting $o(w,A)$ and $o(w,b)$ be the number of occurrences of password $w$ in password lists $A$ and $B$ respectively, 
\begin{align}
  &\sum\limits_{\mathclap{w \in \supp{A}\cup \supp{B}}} min\left(o(w,A), o(w, B)\right)&\nonumber\\
  &=\sum\limits_{\mathclap{w \in \supp{A}}} min\left(o(w,A), o(w, B)\right),&
\label{eq:simplified-min}
\end{align}
where $\supp{A}$ is the set of unique passwords in $A$.
\end{lemma}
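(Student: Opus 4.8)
The plan is to show that extending the index set from $\supp{A}$ to $\supp{A}\cup\supp{B}$ introduces only terms that evaluate to zero, so the two sums coincide. I would start by partitioning the larger index set as $\supp{A}\cup\supp{B} = \supp{A} \cup \left(\supp{B}\setminus\supp{A}\right)$, where the two pieces are disjoint. Splitting the left-hand sum of Eq.~\ref{eq:simplified-min} along this partition isolates the ``extra'' contribution coming from passwords $w$ that occur in $B$ but not in $A$.

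The key step is to evaluate the summand on that extra piece. For any $w \in \supp{B}\setminus\supp{A}$, the password $w$ does not appear in $A$, so $o(w,A)=0$. Since occurrence counts are non-negative (they are cardinalities), we have $o(w,B)\ge 0$, and therefore $\min\!\left(o(w,A),o(w,B)\right) = \min\!\left(0, o(w,B)\right) = 0$. Hence every term in the sum over $\supp{B}\setminus\supp{A}$ vanishes, and that entire piece contributes nothing.

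What remains after discarding the vanishing piece is exactly $\sum_{w \in \supp{A}} \min\!\left(o(w,A),o(w,B)\right)$, which is the right-hand side of Eq.~\ref{eq:simplified-min}, completing the argument. There is no substantive obstacle here; the only point requiring care is the non-negativity of occurrence counts, which guarantees that the minimum with $0$ is $0$ rather than some negative quantity. I would make sure to state this explicitly, since the analogous claim would fail if the ``min'' were replaced by ``max'' (which is precisely why Lemma~\ref{lem:max-min-rel} must retain the full index set $\supp{A}\cup\supp{B}$ for its max-sum).
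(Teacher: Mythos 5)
Your proof is correct and follows essentially the same route as the paper's: partition $\supp{A}\cup\supp{B}$ into $\supp{A}$ and $\supp{B}\setminus\supp{A}$, then observe that $o(w,A)=0$ on the second piece forces each minimum there to vanish. Your explicit remark on the non-negativity of occurrence counts is a small but welcome addition that the paper leaves implicit.
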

\begin{proof}
Partitioning $\supp{A}\cup \supp{B}$ to two disjoint sets of $\supp{A}$ and $\supp{B}-\supp{A}$, we have
\begin{align*}
  &\sum\limits_{\mathclap{w \in \supp{A}\cup \supp{B}}} min\left(o(w,A), o(w, B)\right)=&\\ &\sum\limits_{\mathclap{w \in \supp{A}}} min\left(o(w,A), o(w, B)\right) +&\\ &\sum\limits_{\mathclap{w \in \supp{B}-\supp{A}}} min\left(o(w,A), o(w, B)\right)&.
\end{align*}
As $o(w,A)=0$ for $w \in \supp{B}-\supp{A}$, we have $min\left(o(w,A), o(w, B)\right)=0$ for all $w \in \supp{B}-\supp{A}$. So we have derived Eq.~\ref{eq:simplified-min}.

\end{proof}

\end{document}